\newcommand{\f}[2]{\frac{#1}{#2}}
\newcommand{\de}[2]{\frac{d #1}{d #2}} 
\newcommand{\dd}[2]{\frac{d^2 #1}{d #2^2}} 
\newcommand{\h}[0]{{\mathcal R \mathcal H}_\infty}
\renewcommand{\d}[0]{\mathbb{D}}
\renewcommand{\c}[0]{\partial \mathbb{D}}
\newcommand{\dc}[0]{\overline{\mathbb{D}}}
\newcommand{\p}[0]{\mathcal{P}}
\DeclareMathOperator*{\argmax}{arg\,max}
\DeclareMathOperator*{\mcup}{\cup}
\newtheorem{theorem}{Theorem}
\newtheorem{lemma}{Lemma}
\newtheorem{corollary}{Corollary}
\title{Approximation by Simple Poles -- Part I: Density and Geometric
  Convergence Rate in Hardy Space}
\author{Michael W. Fisher, Gabriela Hug, and Florian D\"{o}rfler
  \thanks{This paper is based upon work supported by the King
  Abdullah University of Science and Technology (KAUST) Office of
  Sponsored Research (award No. OSR-2019-CoE-NEOM-4178.11) and by the
  European Union’s Horizon 2020 research and innovation program (grant
  agreement No. 883985).

  F. D\"{o}rfler and G. Hug are with
  ETH Z\"{u}rich, 8092 Z\"{u}rich, Switzerland.
  
  M. W. Fisher is with University of Waterloo, Waterloo, Ontario, Canada.

  Email: \{mfisher, dorfler\}@ethz.ch; hug@eeh.ee.ethz.ch.
}
}
\begin{document}

\maketitle








\begin{abstract}
  Optimal linear feedback control design is a valuable but challenging problem
  due to nonconvexity of the underlying optimization and infinite
  dimensionality of the Hardy space of stabilizing controllers.
  A powerful class of techniques for solving optimal control problems involves
  using reparameterization to transform the control design to a convex
  but infinite dimensional optimization.
  To make the problem tractable, historical work focuses on Galerkin-type
  finite dimensional approximations to Hardy space, especially those involving
  Lorentz series approximations such as the finite impulse response
  appproximation.
  However, Lorentz series approximations can lead to infeasibility,
  difficulty incorporating prior knowledge, deadbeat control in the case of
  finite impulse response, and increased suboptimality, especially for systems
  with large separation of time scales.
  The goal of this two-part article is to introduce a new Galerkin-type
  method based on approximation by transfer functions with a selection of
  simple poles,
  and to apply this simple pole approximation for optimal control design.
  
  In Part I, error bounds for approximating arbitrary transfer functions in
  Hardy space are provided based on the geometry of the pole selection.
  It is shown that the space of transfer functions with these simple poles
  converges to the full Hardy space, and a uniform convergence rate is provided
  based purely on the geometry of the pole selection.
  This is then specialized to derive a convergence rate for a particularly
  interesting pole selection based on an Archimedes spiral.
  In Part II, the simple pole approximation is combined with system level
  synthesis, a recent reparameterization approach, to develop a new control
  design method.
  This technique is convex and tractable, always feasible, can include prior
  knowledge, does not result in deadbeat control, and works well for systems
  with large separation of time scales.
  A suboptimality certificate is provided which bounds the suboptimality based
  on the geometry of the pole selection.
  This is then specialized to obtain a convergence rate for the Archimedes
  spiral pole selection.
  An example demonstrates superior performance of the method.
\end{abstract}




      

\section{Introduction}\label{sec:intro}


Optimal control design for linear time invariant
systems has been thoroughly studied, but remains a difficult problem.
Key challenges arise from the infinite dimensionality of the
Hardy space of stabilizing controllers, as well as the nonconvexity of the
resulting optimization problem for the design.

One of the most celebrated approaches for optimal control design is the Youla
parameterization \cite{Yo76},
which parameterizes all stabilizing controllers and leads
to a convex reformulation in terms of the Youla parameter.
Recently, new optimal control design techniques have been introduced, known
as system level synthesis (SLS) \cite{Wa18,Wa19} and input-output
parameterization (IOP) \cite{Fu19}, 
which also parameterize all stabilizing controllers
and include closed-loop system responses as decision variables.
However, they have some advantages over Youla parameterization, including
a larger class of optimal control problems which admit convex
representations,
and the ability to preserve structure from the closed-loop transfer functions in
the internal controller realizations.



Applying these parameterizations to mixed
$\mathcal{H}_2/\mathcal{H}_\infty$ control design results in a convex but
infinite dimensional optimization problem since the decision variables are
transfer functions lying in an infinite dimensional Hardy space.
In order to solve these problems, historical work has
focused on using finite dimensional approximations of Hardy space based on
Lorentz series approximations \cite[Chapter~15]{Bo91},\cite{Fu19,Li99} to
obtain a tractable optimization for the design.
In particular, in discrete time the Lorentz series approximation with all
poles at the origin, known as the finite impulse response (FIR) approximation,
is the most popular choice, and is commonly used with Youla parameterization
\cite{Qi04,Al13}, IOP \cite{Fu19}, and SLS \cite{An19}.

The rate at which the Lorentz series approximation converges determines
the number of poles required to achieve a close approximation
and, hence, acceptable suboptimality.
For Lorentz approximations, this convergence rate typically
depends on the particular optimal transfer function, and therefore may result
in large numbers of poles to achieve acceptable performance.
This is an important practical concern, as large numbers of poles can lead to
high computational complexity for the control design, lack of robustness in
the resulting controller, and implementation challenges in practice
\cite[Chapter 19]{Zh95}.
For FIR, the number of poles is equal to the length of the FIR, so this
becomes especially problematic when the optimal transfer function has a long
settling time, such as in systems with large separation of time
scales, where short sampling times are needed to capture the fast dynamics,
which are also coupled with much slower dynamics.

Lorentz series approximations can also lead to
infeasibility of the control design because all the poles in the approximation
are at the same location.
This occurs for example when the optimal
transfer function is proportional to the closed-loop transfer function,
and the plant has stable but uncontrollable poles.
In some cases it may be possible to recover feasibility, such as in
SLS by introducing a slack variable which enables constraint violation
\cite[Section~4.5.3]{An19}, but this leads to additional suboptimality
\cite[Theorem~4.7]{An19} and results in a quasi-convex optimization which
requires an iterative approach such as golden section search to solve.

Furthermore, since all the poles in the Lorentz series approximation are at
the same location, it is unclear how to incorporate prior knowledge about
the optimal solution into the control design.
For example, prior information about the locations of some of the optimal
closed-loop poles
may be available, such as for model matching 
\cite{Sh09},
model reference control \cite{Ab09}, design based on the internal model
principle \cite{Fr76}, expensive control \cite[Theorem~3.12(b)]{Kw72}, etc.
In such situations, it would be desirable to use this information to improve
performance, but it is not clear how to do so with the Lorentz approximation.

In addition, FIR in particular results in deadbeat control (DBC), which often
experiences poorly
damped oscillations between discrete sampling times that can even
persist in steady state, as well as lack of robustness to model uncertainty and
parameter variations because of the high
control gains required to reach the origin in finite time \cite{Ur87}.
So, it is often advantageous to avoid deadbeat control when possible.

This two-part work develops a new Galerkin-type method based on approximation
by transfer functions with simple poles, i.e., poles with multiplicity no more
than one, and uses it to develop a novel optimal control
design method to address the limitations of the Lorentz approximation approach.
The main contributions of our two-part paper are as follows.

In Part I, the simple pole approximation (SPA) is introduced, which gives
the designer the freedom to form an approximation using any finite
selection of stable poles in the unit disk that are closed under complex
conjugation.
It is shown that, for any transfer function in Hardy space, the
approximation error from SPA is bounded, and this bound is proportional
to the geometric distance between the
poles of SPA and the poles of the desired transfer function.
Combining these approximation error bounds with the notion of a space-filling
sequence of poles in the unit disk, it is shown that the subspace of SPAs
converges to the entire Hardy space.
In particular, it is possible to approximate any transfer function in Hardy
space arbitarily well with respect to both the $\mathcal{H}_2$ and
$\mathcal{H}_\infty$ norms using SPA.
Furthermore, if the poles of a space-filling sequence converge to the entire
unit disk at a certain rate, which we term a geometric convergence rate,
then it is shown that SPA converges to any transfer function in Hardy space
at this same uniform convergence rate, which depends purely on the geometry
of the pole selection.
Therefore, unlike with Lorentz approximations in general, and FIR in
particular, the convergence rate of SPA is not reduced for transfer functions
with long settling times, such as those resulting from systems with large
separation of time scales.
Finally, a particularly interesting space-filling sequence of poles is
introduced based on the Archimedes spiral, and used to provide an explicit
uniform convergence rate for this special case based on the spiral geometry.

In Part II \cite{Fi22b}, SPA is combined with SLS to develop a new control
design method.
The approach has a uniform convergence rate, with number of poles
independent of the settling time of the optimal closed-loop responses, and
so works well for systems with large separation of time scales.
It automatically ensures feasibility for stabilizable plants by incorporating
all of the plant poles into the pole selection.  Therefore, it does not
require additional slack variables or constraint violations,
and additional suboptimality resulting from these can be avoided.
Furthermore, it results in a convex optimization for the control design which
can be solved with a single semidefinite program, so no iterative algorithms
are required.  If prior information about the locations of some of the optimal
closed-loop poles is known in advance, these can be directly incorporated
into the pole selection of SPA, leading to improved performance.
In addition, SPA is not FIR, so it does not result in deadbeat control.
A suboptimality certificate is provided which gives a uniform bound on the
suboptimality based on the geometry of the SPA pole selection.
This is then specialized to the pole selection of the Archimedes spiral to
obtain an explicit convergence rate.
Superior performance of the method is demonstrated
on an example of power converter control design, which first motivated the
development of SPA due to inadequate performance of SLS with FIR.
This example is fully reproducible with all code publicly
available \cite{git_sls}.

The main connection between Part I and Part II is that the control design method
and suboptimality certificates developed in Part II rely heavily on the SPA
method, bounded approximation error results, and convergence certificates
developed in Part I.
Furthermore, the example of Part II demonstrates the practical use of the
SPA method from Part I for control design, and its superior performance
compared to the FIR approach for that case.
As the SPA method and results provided in Part I are developed in generality,
and not specialized to SLS with state feedback until Part II, they may be of
independent interest for other control design techniques as well, such as IOP,
Youla parameterization, and SLS with output feedback.
In turn, the derivation of the suboptimality bounds in Part II may provide a
general recipe for deriving similar suboptimality bounds for these other control
design methods as well.

The remainder of Part I is organized as follows.
Section~\ref{sec:not} provides some notation and preliminaries.
Section~\ref{sec:res} provides the development
of the SPA method as well as the main results regarding bounded approximation
error, convergence of the SPA to all of Hardy space, uniform convergence rate,
and specialization of these to the Archimedes spiral pole selection.
Section~\ref{sec:proof} gives the proofs of the theoretical results.
Finally, Section~\ref{sec:con} offers concluding remarks.

\section{Preliminaries and Notation}\label{sec:not}



We will use the following notation throughout the paper.
Let $\mathbb{D}$ be the open unit disk in the complex plane,
$\overline{\mathbb{D}}$ the
closed unit disk, and $\partial \mathbb{D}$ the unit circle.
Let $\overline{B}_r$ be the closed ball of radius $r$ centered at the origin.
For any set $S \subset \mathbb{C}$ and any $z \in \mathbb{C}$,
let $d(z,S) = \inf_{w \in S} |z-w|$.
For any sets $S, S' \subset \mathbb{C}$,
let $d(S,S') = \inf_{z \in S, w \in S'} |z-w|$.
For any set $S \subset \mathbb{C}$ and $\epsilon > 0$, define the $\epsilon$
neighborhood of $S$, denoted $S_\epsilon$, to be the set
$\{z \in \mathbb{C}:~d(z,S) \leq \epsilon\}$.
For any nonempty sets $S, S' \subset \mathbb{C}$, define the Hausdorff distance
$d_H(S,S')$ to be the infimum over $\epsilon > 0$ such that
$S \subset S'_\epsilon$ and $S' \subset S_\epsilon$.
Then $d_H$ is a well-defined metric, and we say that a sequence of sets
$\{S_n\}_{n=1}^\infty$ converges to the set $S^*$, denoted $S_n \to S^*$,
if $\lim_{n \to \infty} d_H(S_n,S^*) = 0$.
For any $z \in \mathbb{C}$, let $\overline{z}$ denote the complex conjugate
of $z$, and let $\text{Re}(z)$ and $\text{Im}(z)$ denote the real and
imaginary components of $z$, respectively.
For any positive integer $m$, let $I_m = \{1,2,...,m\}$.
For any set $S$, let $|S|$ be the cardinality of set $S$ (i.e., the number of
elements it contains).

For a matrix $M$, let $||M||_2$ denote the spectral norm of $M$,
and let $||M||_F$ denote the Frobenius norm of $M$.
Let $M_{i,j}$ denote the entry of $M$ at row $i$ and column $j$.


Let $\h$ be the Hardy space of real, rational, proper, and stable transfer
functions in discrete time (see \cite{Zh95,Du00} for a more detailed
discussion).  Let $\hat{m} \times \hat{n}$ denote the dimensions of each
transfer function in $\h$.
Let $\f{1}{z} \h \subset \h$ consist of the strictly proper transfer functions
in $\h$.
For $S \in \h$, define the norms
\begin{align*}
  ||S||_{\mathcal{H}_\infty} &= \sup_{z \in \c} ||S(z)||_2, \;
  ||S||_{\mathcal{H}_2}^2 = \f{1}{2\pi} \int_{z \in \c} ||S(z)||_F^2 ~dz.
\end{align*}
For any $S, S' \in \h$, define the metrics
\begin{align*}
  d_{\mathcal{H}_\infty}(S,S') &= ||S-S'||_{\mathcal{H}_\infty}, \;
  d_{\mathcal{H}_2}(S,S') &= ||S-S'||_{\mathcal{H}_2}.
\end{align*}
For any $S \in \h$ and positive integer $k$, let $\mathcal{I}(S)(k)$ denote
the time-domain impulse response of $S$ at time $k$.
For $S \in \h$, by Parseval's theorem it can be shown that
\begin{align*}
  ||S||_{\mathcal{H}_2}^2 &= \sum_{k=1}^\infty ||\mathcal{I}(S)(k)||_F^2
    =: ||\mathcal{I}(S)||_F^2 = \lim_{T \to \infty} ||\mathcal{I}_T(S)||_F^2,
\end{align*} 
where ${\scriptstyle \mathcal{I}_T(S)} := \left[\begin{smallmatrix}
    S(1)^\intercal &
    S(2)^\intercal &
    \hdots & 
    S(T)^\intercal
  \end{smallmatrix}\right]^\intercal$,
which can be well approximated numerically using $T$ finite.
For any $S \in \h$ let $\mathcal{C}(S)$ denote the convolution operator of
$S$ so that for any input signal $u(k)$ and nonnegative integer $n$,
$\mathcal{C}(S)(u)[n] = \sum_{k=1}^\infty \mathcal{I}(S)(k)u(n-k).$
Note that for $S \in \h$, since $||\cdot||_{\mathcal{H}_\infty}$ is the
induced $\mathcal{L}_2 \to \mathcal{L}_2$ norm, it can be shown that
\begin{align*}
  ||S||_{\mathcal{H}_\infty} &= \mkern-10mu
  \sup_{||u||_2 \neq 0} ||\mathcal{C}(S)(u)||_2
    =: ||\mathcal{C}(S)||_2 = \lim_{T \to \infty} ||\mathcal{C}_T(S)||_2 \\
  {\scriptstyle \mathcal{C}_T(S)} &:=
   {\scriptstyle
  \left[\begin{smallmatrix}
    S(1) &  0  & 0 & 0 \\
    S(2) & S(1) & 0 & 0 \\
    \vdots &  & \ddots & 0 \\
    S(T) & S(T-1) & ... & S(1) \\
    \end{smallmatrix}\right],}
\end{align*}
which can be well approximated numerically using $T$ finite.

We consider another notion of set convergence, which will be useful when
proving density of the simple pole approximation in Hardy space.
Let $U$ be a metric space with metric $\hat{\rho}$, let $S \subset U$, and
let $\{S_n\}_{n=1}^\infty$ be a sequence of subsets of $U$.
Define $\liminf_{n \to \infty} S_n$ with respect to $\hat{\rho}$ to be the set of
points $x \in U$ such that
there exists a sequence $\{x_n\}_{n=1}^\infty$ with $x_n \in S_n$ for all $n$
such that $\lim_{n \to \infty} x_n = x$ with respect to $\hat{\rho}$
(i.e., $\lim_{n \to \infty} \hat{\rho}(x_n,x) = 0$).
Define $\limsup_{n \to \infty} S_n$ with respect to $\hat{\rho}$ to be the set of
points $x \in U$ such that
there exists a subsequence $\{S_{n_m}\}_{m=1}^\infty$ of $\{S_n\}_{n=1}^\infty$
and a sequence $\{x_{n_m}\}_{m=1}^\infty$ such that $x_{n_m} \in S_{n_m}$ for all
$m$ and $\lim_{m \to \infty} x_{n_m} = x$ with respect to $\hat{\rho}$.
If $\liminf_{n \to \infty} S_n = \limsup_{n \to \infty} S_n$, say
$\liminf_{n \to \infty} S_n = \limsup_{n \to \infty} S_n = S$,
then we say that
$\{S_n\}_{n=1}^\infty$ converges to $S$, and write $\lim_{n \to \infty} S_n = S$,
with respect to $\hat{\rho}$.
Note that in the special case where $U$ is compact,
$\lim_{n \to \infty} S_n = S$ if and only if $\lim_{n \to \infty} d_H(S_n,S) = 0$
\cite[Section~28, Statement~V]{Ha57},
so this notion of set convergence is a natural extension to noncompact metric
spaces of convergence with respect to the Hausdorff distance.

\section{Main Results}\label{sec:res}

\subsection{Approximation by Simple Poles}
\label{sec:approx}

We begin by showing that transfer functions with only simple poles (i.e., poles
with multiplicity no greater than one) can be used
to approximate any transfer function in $\f{1}{z}\h$, including those with
repeated poles, to arbitrary accuracy.
Towards that end, let $S \in \f{1}{z}\h$ and let $\mathcal{Q}$ be the poles
of $S$.
For each pole $q \in \mathcal{Q}$, let $m_q^*$ be its multiplicity in $S$
and let $m_{\max}$ be the maximum multiplicity,
i.e., $m_{\max} = \max_{q \in \mathcal{Q}} m_q$.
Then the partial fraction decomposition of $S$ can be written
\begin{align*}
  S(z) = \sum_{q \in \mathcal{Q}} \sum_{j=1}^{m_q^*} G_{(q,j)}^* \f{1}{(z-q)^j}
\end{align*}
for some constant coefficient matrices $G_{(q,j)}^*$.

Let $\p$ be a set of simple and distinct poles, hereafter referred to as
approximating poles, which will be used to construct a transfer function for
approximating $S$ (i.e., $S \approx \sum_{p \in \p} G_p \f{1}{z-p}$).
The key idea is that for each pole $q \in \mathcal{Q}$,
an approximating transfer function is constructed to approximate $q$'s
contribution to the partial fraction decomposition of $S$. The
poles of this approximation are selected to be the $m_q$ closest poles in
$\p$ to $q$, which we
denote by $\p(q)$.
Then, the overall approximating transfer function for $S$ is obtained by
summing over the individual approximating transfer functions for each
$q \in \mathcal{Q}$.

To evaluate the accuracy of the approximation, for each $q \in \mathcal{Q}$
let $\hat{d}(q)$ be the distance from $q$ to the furthest of the $m_q$ simple
poles being used to approximate it, i.e.,
$\hat{d}(q) = \max_{p \in \p(q)} |p-q|$.
Let $D(\p)$ be the maximum of these distances over all the poles in
$\mathcal{Q}$, i.e., $D(\p) = \max_{q \in \mathcal{Q}} \hat{d}(q)$.
Then $D(\p)$ represents the largest distance between approximating poles
in $\p$ and the poles in $\mathcal{Q}$ they are being used to approximate,
so it measures the worst-case geometric error in this pole approximation.
Intuitively one might therefore expect that as $D(\p) \to 0$, the approximating
transfer function would approach $S$.
This intuition is formalized in
Theorem~\ref{thm:approx}, which provides an
approximation error bound in terms of standard Hardy space norms
of the
simple pole approximation which is linear in $D(\p)$.
Thus, Theorem~\ref{thm:approx} shows that the simple pole approximating
transfer function converges to $S$ at least linearly with $D(\p)$ and,
therefore, that this convergence rate depends purely
on the geometry of the pole selection. 

Before presenting Theorem~\ref{thm:approx}, we make the following assumptions
regarding the set of approximating poles $\p$:

  \vspace{5pt}

{\it
  
(A1) There exists some $r \in (0,1)$ such that
$\p \subset \overline{B}_r$.

(A2) $|\p| \geq m_{\max}$. 

(A3) $\p$ is closed under complex conjugation (i.e., $p \in \p$
\indent implies that $\overline{p} \in \p$).

(A4) $D(\p) < 1$.

(A5) Let $\sigma \subset \d$ be finite.  Then for every
$q \in \mathcal{Q}$ and every \indent $\lambda \in \sigma$ with
$\lambda \neq q$, $\lambda \not\in \p(q)$.
}

  \vspace{5pt}

Assumption A1 ensures that $\p$ consists of stable poles,
Assumption A2 that the size of $\p$ is at least as large as $m_{\max}$,
Assumption A3 that $\p$ can be used to construct a transfer function with
real coefficients (as will be seen in the proof of Theorem~\ref{thm:approx}),
and Assumption A4 that $\p$ is not excessively far from $\mathcal{Q}$.
Note that Assumption A4 can be satisfied with only two pairs
of complex conjugate poles if $S$ has only simple poles,
so it tends not to be restrictive in practice.
Assumption A5 is introduced here in anticipation of its use in
\cite[Theorem~1]{Fi22b} where $\sigma$ will represent the poles of the plant,
and can be easily satisfied in practice.
For its use in \cite[Theorem~1]{Fi22b}, define
$\delta = \min_{q \in \mathcal{Q}, \lambda \in \sigma, \lambda \neq q} d(\lambda,\p(q))$,
and note that $\delta > 0$ by Assumption A5.
We are now ready to present Theorem~\ref{thm:approx}.

\begin{theorem}[Simple Pole Approximation]\label{thm:approx}
Let $S \in \f{1}{z}\h$ and let $\p$ be a set of poles satisfying Assumptions
A1-A4.  
Then there exist constants
$c_S = c_S(\mathcal{Q},G_{(q,j)}^*,r) > 0$ and
$c_S' = c_S'(\mathcal{Q},G_{(q,j)}^*,r) > 0$,
and constant matrices
$\{G_p\}_{p \in \p}$ such that
$\sum_{p \in \p} G_p \f{1}{z-p} \in \f{1}{z}\h$
and 
\begin{align}
  \begin{split}
    \left|\left| \sum_{p \in \p} G_p \f{1}{z-p} - S
    \right|\right|_{\mathcal{H}_2}
    &\leq c_S D(\p) \\
    \left|\left| \sum_{p \in \p} G_p \f{1}{z-p} - S
    \right|\right|_{\mathcal{H}_\infty}
    &\leq c_S' D(\p).
  \end{split}
  \label{eq:mimo_approx}
\end{align}
\end{theorem}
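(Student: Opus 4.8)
The plan is to construct the approximant explicitly, pole cluster by pole cluster, and to control the error of approximating each higher-order term $\frac{1}{(z-q)^j}$ by a product of simple poles clustered near $q$. For each $q \in \mathcal{Q}$, enumerate the $m_q$ nearest approximating poles as $\mathcal{P}(q) = \{p_1^q,\ldots,p_{m_q}^q\}$ (distinct by hypothesis, with $|p_i^q - q| \le \hat d(q)$), and define the contribution
\[
  \tilde S_q(z) = \sum_{j=1}^{m_q} G_{(q,j)}^* \frac{1}{\prod_{i=1}^{j}(z - p_i^q)}, \qquad \tilde S = \sum_{q \in \mathcal{Q}} \tilde S_q.
\]
The functions $\frac{1}{\prod_{i=1}^{j}(z-p_i^q)}$ form a Newton-type basis: they are exactly the divided differences of $w \mapsto \frac{1}{z-w}$ at the nodes $p_1^q,\ldots,p_j^q$, and as those nodes coalesce to $q$ they converge to $\frac{1}{(z-q)^j}$. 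Each $\tilde S_q$ therefore has simple poles precisely at the points of $\mathcal{P}(q)$, so expanding $\tilde S$ in partial fractions (summing coincident contributions) produces the required coefficients $\{G_p\}_{p \in \mathcal{P}}$, with $G_p = 0$ for any $p$ used by no $q$; and $\sum_{p} G_p \frac{1}{z-p} = \tilde S \in \frac{1}{z}\h$ since every pole lies in $\overline{B}_r \subset \d$.

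First I would quantify the per-term error. Writing $\epsilon_i = p_i^q - q$ and expanding in $u = z-q$ gives, for each $j$,
\[
  (z-q)^j - \prod_{i=1}^{j}(z-p_i^q) = -\sum_{m=1}^{j} (-1)^{m}\, e_{m}(\epsilon_1,\ldots,\epsilon_j)\,(z-q)^{j-m},
\]
where $e_m$ is the $m$-th elementary symmetric polynomial. Since $|\epsilon_i| \le \hat d(q) < 1$ by Assumption A4, each term obeys $|e_m| \le \binom{j}{m}\hat d(q)^m \le \binom{j}{m}\hat d(q)$, so the whole difference is $O(\hat d(q))$. Dividing by $\prod_i(z-p_i^q)\,(z-q)^j$ then expresses the scalar error
\[
  E_{q,j}(z) := \frac{1}{\prod_{i=1}^{j}(z-p_i^q)} - \frac{1}{(z-q)^j}
\]
as a finite sum of strictly proper rational functions, each carrying a factor $e_m = O(\hat d(q))$ and having all poles among $\{p_1^q,\ldots,p_j^q,q\}$.

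Next I would convert these into uniform norm bounds. Put $\rho = \max\{r,\ \max_{q\in\mathcal{Q}}|q|\} < 1$, so every pole appearing in $E_{q,j}$ lies in $\overline{B}_\rho$. Using $\|\tfrac{1}{z-a}\|_{\mathcal{H}_\infty} = \frac{1}{1-|a|} \le \frac{1}{1-\rho}$ together with submultiplicativity of the scalar $\mathcal{H}_\infty$ norm under products, and $\|\cdot\|_{\mathcal{H}_2} \le \|\cdot\|_{\mathcal{H}_\infty}$, each auxiliary rational function has $\mathcal{H}_2$ and $\mathcal{H}_\infty$ norm bounded by a constant depending only on $j$ and $\rho$ (hence on $\mathcal{Q}$ and $r$). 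Thus $\|E_{q,j}\|_{\mathcal{H}_\infty} \le C(j,\rho)\,\hat d(q)$, and likewise in $\mathcal{H}_2$. Pulling the constant matrices out via $\|G\,E\|_{\mathcal{H}_\infty} = \|G\|_2\,\|E\|_{\mathcal{H}_\infty}$ and $\|G\,E\|_{\mathcal{H}_2} = \|G\|_F\,\|E\|_{\mathcal{H}_2}$ for scalar $E$, and summing
\[
  \tilde S - S = \sum_{q\in\mathcal{Q}}\sum_{j=1}^{m_q} G_{(q,j)}^*\, E_{q,j}(z)
\]
over the finitely many $q$ and $j$, bounds both norms by a constant times $\max_q \hat d(q) = D(\mathcal{P})$; collecting the prefactors $\|G_{(q,j)}^*\|$ and $C(j,\rho)$ yields $c_S$ and $c_S'$, which depend only on $\mathcal{Q}$, the $G_{(q,j)}^*$, and $r$.

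Finally I would confirm that $\tilde S$ has real coefficients, which is where Assumption A3 enters. Because $S$ is real, $\mathcal{Q}$ is conjugation-invariant with $G_{(\overline q,j)}^* = \overline{G_{(q,j)}^*}$; since $\mathcal{P}$ is closed under conjugation and distances are conjugation-invariant, the nearest-pole selection can be taken so that $\mathcal{P}(\overline q) = \overline{\mathcal{P}(q)}$ (using a conjugation-consistent tie-break when distances coincide), whence $\tilde S_{\overline q} = \overline{\tilde S_q}$ and $\tilde S$ is real. I expect the main obstacle to be the uniform norm control: the auxiliary functions in $E_{q,j}$ have poles that approach $q$ as $D(\mathcal{P}) \to 0$, so one must verify that their norms stay bounded rather than degrading with the order of contact. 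This is exactly what confinement to $\overline{B}_\rho$ with $\rho < 1$ and the nested, submultiplicative structure secure, and it is the step that makes the error genuinely linear in $D(\mathcal{P})$ rather than merely $o(1)$.
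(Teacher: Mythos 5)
Your construction is essentially the paper's: you approximate each repeated-pole term $\frac{1}{(z-q)^j}$ by $\frac{1}{\prod_{i=1}^j(z-p_i^q)}$ built on the $j$ nearest distinct approximating poles (the paper's Lemma~2 writes this same product in the simple-pole basis via the partial-fraction coefficients $c_i = 1/\prod_{j\neq i}(p_i-p_j)$), bound the numerator $(z-q)^j - \prod_i(z-p_i^q)$ linearly in $\hat d(q)$ using $\hat d(q)<1$, and bound the denominator away from zero on the unit circle using confinement of all poles to $\overline{B}_\rho$ with $\rho<1$. Your expansion in $u=z-q$ via elementary symmetric polynomials is a cleaner packaging of the paper's Lemmas~1--2, which expand in powers of $z$ and track subsets explicitly, and your norm bookkeeping (the identities $\|GE\|_{\mathcal{H}_\infty}=\|G\|_2\|E\|_{\mathcal{H}_\infty}$, $\|GE\|_{\mathcal{H}_2}=\|G\|_F\|E\|_{\mathcal{H}_2}$ for scalar $E$, and $\|\cdot\|_{\mathcal{H}_2}\le\|\cdot\|_{\mathcal{H}_\infty}$ for scalars) matches the paper's Corollary~1 and MIMO assembly. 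You also correctly identified the crux --- that the auxiliary functions' norms stay bounded as the cluster coalesces because the poles stay in $\overline{B}_\rho$ --- which is what makes the error linear in $D(\mathcal{P})$.

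The one genuine flaw is in the realness step. For complex $q$ you may indeed define $\mathcal{P}(\overline q):=\overline{\mathcal{P}(q)}$, but for \emph{real} $q$ your ``conjugation-consistent tie-break'' demands $\mathcal{P}(q)=\overline{\mathcal{P}(q)}$, and no tie-break can deliver this in general: a self-conjugate cluster of odd size $m_q$ must contain a real pole, yet nothing in Assumptions A1--A4 forces $\mathcal{P}$ to contain any real poles (the paper's own Archimedes selection in \eqref{eq:poles} has none). So with, say, real $q$, $m_q=1$, and nearest poles a conjugate pair $p,\overline p$, the cluster is necessarily $\{p\}$ or $\{\overline p\}$, neither self-conjugate, and $\tilde S_q$ acquires a genuinely complex coefficient that nothing in your sum cancels --- $\tilde S$ then fails to have real coefficients. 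The paper's fix is to symmetrize rather than select: for real $q$ it uses the \emph{average} of the two approximants built on $\mathcal{P}(q)$ and $\overline{\mathcal{P}(q)}$ (the coefficients $\frac{c_p^{(q,j)}+c_p^{(\overline q,j)}}{2}$ in its definition \eqref{eq:gp} of $G_p$). Since $|\overline p - q|=|p-q|$ when $q$ is real, both halves satisfy the identical error bound, so the averaged approximant preserves your estimates and is real by construction. With that one-line patch your argument is complete.
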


Note that the constants $c_S$ and $c_S'$ appearing in
Theorem~\ref{thm:approx} depend only on
$S \in \f{1}{z}\h$ and on the radius
$r$ of the closed ball $\overline{B}_r$ in which $\p$ is contained, and
do not otherwise depend on the specific pole selection $\p$.
This feature will play a crucial role in the proofs of the density and
uniform convergence results of Section~\ref{sec:dens}.

\subsection{Density and Uniform Converence Rate}\label{sec:dens}

The goals of this section are to show that simple pole approximations using
suitable pole selections are dense in the Hardy space $\f{1}{z}\h$, and that
a uniform convergence rate of the simple pole approximation to any transfer
function in $\f{1}{z}\h$ can be provided which depends only on the geometry of
the pole selection.
Towards that end, we define a sequence of poles, denoted $\{\p_n\}_{n=1}^\infty$,
to be a sequence where for each $n$, $\p_n$ is a finite collection of poles
contained in $\d$ and closed under complex conjugation.
We say that a sequence of poles $\{\p_n\}_{n=1}^\infty$ exhibits
{\it geometric convergence} if for any $S \in \f{1}{z}\h$,
its worst-case geometric approximation error converges to zero,
i.e., $\lim_{n \to \infty} D(\p_n) = 0$.
This definition relates naturally to the approximation error bound provided
in Theorem~\ref{thm:approx}.
However, it is valuable to introduce the related notion of a space-filling
sequence of poles.
In particular, we say that a sequence of poles $\{\p_n\}_{n=1}^\infty$ is
{\it space-filling}
if it is a sequence of pole
selections such that $\p_n \to \dc$ with respect to the Hausdorff distance,
i.e., the pole sequence converges to the entire unit disk.
A particularly interesting example of a space-filling sequence based on
the Archimedes spiral is provided in Section~\ref{sec:spiral}.

Unlike geometric convergence, the notion of space-filling is more
intuitive to visualize and understand, avoids an abstract definition in
terms of transfer functions, is easier to establish in practice, and
will serve to illuminate the connection between convergence of the pole
sequence to the entire unit disk and convergence of the space of simple pole
approximations to the entire Hardy space $\f{1}{z}\h$.
Therefore, it is natural that Theorem~\ref{thm:dens}, which establishes the
convergence of the space of simple pole
approximations to $\f{1}{z}\h$, is shown for any space-filling sequence
of poles.

Lemma~\ref{lem:equiv} shows that for a sequence of poles, the topological
property of space-filling is equivalent to the geometric property of
exhibiting geometric convergence.
This will be useful in establishing density of the simple pole approximation
using space-filling sequences of poles in the Hardy space $\f{1}{z}\h$.

\begin{lemma}[Equivalence of Space-Filling and Geometric Convergence]
  \label{lem:equiv}
  Let $\{\p_n\}_{n=1}^\infty$ be a sequence of poles.
  Then $\{\p_n\}_{n=1}^\infty$ is space-filling if and only if it exhibits
  geometric convergence.
\end{lemma}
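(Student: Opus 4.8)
The plan is to prove the two implications separately, after first reducing Hausdorff convergence to a one-sided statement. Since each $\p_n$ is by definition a finite subset of $\d \subset \dc$, the inclusion $\p_n \subset \dc_\epsilon$ holds trivially for every $\epsilon \geq 0$, so that $d_H(\p_n, \dc) = \sup_{z \in \dc} d(z, \p_n)$. Thus space-filling is equivalent to the single condition $\sup_{z \in \dc} d(z,\p_n) \to 0$, i.e., the poles eventually come within any prescribed distance of every point of the closed disk, and I would work with this reformulation throughout.

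For the direction space-filling $\Rightarrow$ geometric convergence, fix an arbitrary $S \in \f{1}{z}\h$ with pole set $\mathcal{Q}$ and maximum multiplicity $m_{\max}$, and fix $\delta > 0$. For each $q \in \mathcal{Q}$, which lies in the open disk, I would select $m_{\max}$ distinct test points lying in $\d$ within distance $\rho$ of $q$, for some $\rho < \delta$, that are mutually separated by some $\eta_q > 0$. Writing $\epsilon_n := \sup_{z \in \dc} d(z,\p_n)$, space-filling gives $\epsilon_n \to 0$; once $\epsilon_n < \min(\eta_q/2,\, \delta - \rho)$, each test point admits a pole of $\p_n$ within distance $\epsilon_n$, the separation $\eta_q$ forces these poles to be \emph{distinct}, and each lies within $\rho + \epsilon_n < \delta$ of $q$. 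Hence $\p_n$ contains at least $m_{\max} \geq m_q$ distinct poles within distance $\delta$ of $q$, so the $m_q$ closest satisfy $\hat{d}(q) \leq \delta$. Taking $n$ large enough to treat the finitely many $q \in \mathcal{Q}$ simultaneously yields $D(\p_n) \leq \delta$, and since $\delta$ was arbitrary, $D(\p_n) \to 0$.

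For the converse I would argue by contraposition. Suppose $\{\p_n\}_{n=1}^\infty$ is not space-filling; then there is $\delta > 0$ and a subsequence $\{n_k\}$ with $\sup_{z \in \dc} d(z,\p_{n_k}) > \delta$, so there exist $w_k \in \dc$ with $d(w_k,\p_{n_k}) > \delta$. By compactness of $\dc$, pass to a further subsequence so that $w_k \to w^*$; then for all large $k$ one has $d(w^*,\p_{n_k}) > \delta/2$. The essential subtlety is that $w^*$ may lie on $\c$, where a stable transfer function cannot have a pole, so instead of placing a pole at $w^*$ I would fix an interior point $w' \in \d$ with $|w' - w^*| < \delta/4$ and set $S(z) = \f{1}{(z-w')(z-\overline{w'})}$, or $S(z) = \f{1}{z-w'}$ when $w'$ is real. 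This $S$ is real, strictly proper, and stable, hence lies in $\f{1}{z}\h$, with $w' \in \mathcal{Q}$ of multiplicity one. For large $k$ this gives $\hat{d}(w') = d(w',\p_{n_k}) \geq d(w^*,\p_{n_k}) - |w'-w^*| > \delta/4$, whence $D(\p_{n_k}) \geq \delta/4 \not\to 0$, contradicting geometric convergence for this particular $S$. I expect this boundary case of the converse, together with verifying that the constructed $S$ genuinely belongs to $\f{1}{z}\h$ and the bookkeeping needed to extract $m_q$ distinct approximating poles in the forward direction, to be the only delicate points; the remainder is elementary metric-space reasoning.
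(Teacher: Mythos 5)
Your proof is correct, and while your space-filling $\Rightarrow$ geometric convergence direction is essentially the paper's argument (separated test points near each pole of $S$, with the separation forcing $m_q$ distinct nearby poles of $\p_n$ --- your bookkeeping with $\epsilon_n < \min(\eta_q/2,\,\delta-\rho)$ is in fact slightly tighter than the paper's), your converse direction takes a genuinely different route. The paper argues directly: given $\epsilon > 0$, it picks a finite $\tfrac{\epsilon}{2}$-net $\mathcal{Q} \subset \d$ of $\dc$, encodes the entire net as the simple poles of a single witness $S \in \f{1}{z}\h$, and applies geometric convergence to that one $S$ to force a pole of $\p_n$ within $\tfrac{\epsilon}{2}$ of every net point, yielding $d_H(\p_n,\dc) \leq \epsilon$ outright. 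You instead argue by contraposition: extract via compactness of $\dc$ a limit point $w^*$ that a subsequence of $\p_{n_k}$ stays $\tfrac{\delta}{2}$-far from, perturb inward to $w' \in \d$ (correctly handling the subtlety that $w^*$ may lie on $\c$, where no stable pole can be placed --- a subtlety the paper's construction sidesteps entirely by choosing its net inside $\d$ from the start), and defeat geometric convergence with a minimal witness of one conjugate pair. The paper's route buys explicit quantitative control (an $N$ beyond which $d_H(\p_n,\dc) \leq \epsilon$) and avoids subsequence extraction; yours buys economy --- a two-pole witness rather than a net-sized one --- plus the clean preliminary reduction $d_H(\p_n,\dc) = \sup_{z \in \dc} d(z,\p_n)$, which is valid since $\p_n \subset \dc$ makes one side of the Hausdorff condition vacuous, and which the paper never states explicitly.
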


For any sequence of poles $\{\p_n\}_{n=1}^\infty$, for each $n$ let
\begin{align*}
  \mathcal{A}_n = \left\{\sum_{p \in \p_n} G_p \f{1}{z-p} \in \f{1}{z}\h :~
  G_p \in \mathbb{C}^{\hat{n} \times \hat{m}}\right\}.
\end{align*}
Then $\mathcal{A}_n \subset \f{1}{z}\h$ denotes the space of
simple pole approximations resulting from the pole selection $\p_n$.
Theorem~\ref{thm:dens} shows the density
of the simple pole approximation with any space-filling sequence of poles
in the Hardy space $\f{1}{z}\h$.
For the statement of Theorem~\ref{thm:dens}, it is useful to recall the notion
of set convergence defined in Section~\ref{sec:not}.

\begin{theorem}[Density in Hardy Space]\label{thm:dens}
  Let $\{\p_n\}_{n=1}^\infty$ be a space-filling sequence of poles.
  Then
  \begin{align*}
    \lim_{n \to \infty} \mathcal{A}_n  =  \f{1}{z}\h
  \end{align*}
  with respect to $d_{\mathcal{H}_2}$ and $d_{\mathcal{H}_\infty}$.
  
\end{theorem}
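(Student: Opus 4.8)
The plan is to exploit the fact that the Kuratowski set limit is taken intrinsically in the metric space $U = \f{1}{z}\h$, so that $\liminf_{n \to \infty} \mathcal{A}_n \subseteq \limsup_{n \to \infty} \mathcal{A}_n \subseteq \f{1}{z}\h$ holds automatically: every point of either set is, by definition, an element of $U$. It therefore suffices to prove the reverse inclusion $\f{1}{z}\h \subseteq \liminf_{n \to \infty} \mathcal{A}_n$, i.e., to fix an arbitrary $S \in \f{1}{z}\h$ with poles $\mathcal{Q}$ and produce a sequence $S_n \in \mathcal{A}_n$ converging to $S$ in both $d_{\mathcal{H}_2}$ and $d_{\mathcal{H}_\infty}$. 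The natural candidate for $S_n$ is the simple pole approximation of Theorem~\ref{thm:approx}: by Lemma~\ref{lem:equiv}, the space-filling sequence $\{\p_n\}_{n=1}^\infty$ exhibits geometric convergence, so $D(\p_n) \to 0$ (computed relative to $\mathcal{Q}$), and the two bounds in \eqref{eq:mimo_approx} would then force $S_n \to S$ in both norms.

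The crux, and the one genuine obstacle, is that Theorem~\ref{thm:approx} requires Assumption A1 — all approximating poles confined to a fixed ball $\overline{B}_r$ with $r < 1$ — whereas a space-filling sequence deliberately drives poles toward $\partial \d$, so $\p_n$ itself violates A1 for large $n$. I would resolve this by approximating $S$ using only the sub-collection $\p_n' := \bigcup_{q \in \mathcal{Q}} \p_n(q)$ of poles actually selected to approximate the finitely many poles of $S$, setting the remaining coefficients to zero, so that $S_n := \sum_{p \in \p_n'} G_p \f{1}{z-p} = \sum_{p \in \p_n} G_p \f{1}{z-p} \in \mathcal{A}_n$. Since $\mathcal{Q}$ is a fixed finite subset of $\d$, it lies in some $\overline{B}_{r_0}$ with $r_0 < 1$, and every pole of $\p_n'$ lies within distance $D(\p_n)$ of $\mathcal{Q}$; as $D(\p_n) \to 0$, for all $n$ sufficiently large one has $\p_n' \subset \overline{B}_r$ with the single fixed radius $r := (1 + r_0)/2 < 1$, establishing A1. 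Assumptions A2 and A4 hold for large $n$ because $|\p_n| \to \infty$ and $D(\p_n) \to 0$, while A3 can be arranged for $\p_n'$ using that both $\mathcal{Q}$ and $\p_n$ are closed under conjugation, so the closest-pole selection may be taken conjugation symmetric. One checks that the $m_q$ poles of $\p_n'$ nearest $q$ coincide with $\p_n(q)$, whence $D(\p_n') = D(\p_n)$.

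With A1--A4 verified for $\p_n'$, I would apply Theorem~\ref{thm:approx} to $S$ and $\p_n'$. The decisive ingredient is the remark following that theorem: the constants $c_S, c_S'$ depend only on $S$ and on the radius $r$, not on the particular pole selection. Because $r$ has been chosen uniformly in $n$, a single pair $(c_S, c_S')$ serves for all large $n$, and the bounds yield $d_{\mathcal{H}_2}(S_n, S) \le c_S D(\p_n)$ and $d_{\mathcal{H}_\infty}(S_n, S) \le c_S' D(\p_n)$, both vanishing as $n \to \infty$. Thus the same sequence $S_n$ converges to $S$ in both metrics, placing $S \in \liminf_{n \to \infty} \mathcal{A}_n$; since $S$ was arbitrary, this gives $\f{1}{z}\h \subseteq \liminf_{n \to \infty} \mathcal{A}_n$ and completes the argument. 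The principal difficulty throughout is reconciling the space-filling hypothesis with Assumption A1, and its resolution hinges entirely on the pole-selection-independence of the constants in Theorem~\ref{thm:approx}.
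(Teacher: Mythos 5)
Your proposal is correct and follows essentially the same route as the paper: both invoke Lemma~\ref{lem:equiv} to get $D(\p_n) \to 0$, both resolve the tension with Assumption A1 by restricting to the poles near $\mathcal{Q}$ inside a fixed ball $\overline{B}_r$ (the paper uses $\p_n \cap \overline{B}_r$ with $r = \f{1}{2}(1+\max_{q \in \mathcal{Q}}|q|)$, you use $\bigcup_{q \in \mathcal{Q}} \p_n(q)$, which amounts to the same thing since $D(\p_n\cap\overline{B}_r)=D(\p_n')=D(\p_n)$), and both hinge on the constants $c_S, c_S'$ of Theorem~\ref{thm:approx} depending only on $S$ and $r$, uniformly in $n$, before closing with the sandwich $\liminf_{n\to\infty}\mathcal{A}_n \subset \limsup_{n\to\infty}\mathcal{A}_n \subset \f{1}{z}\h$. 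No gaps to report.
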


For any $k > 0$, we say that a sequence of poles $\{\p_n\}_{n=1}^\infty$ has
geometric convergence
rate $\f{1}{n^k}$ if for each $S \in \f{1}{z}\h$ there exists a constant
$c_S > 0$ such that $D(\p_n) \leq \f{c_S}{n^k}$ for all positive integers $n$.
Note that this is in fact a uniform convergence rate since the rate $k$ is
independent of the choice of $S \in \f{1}{z}\h$.
Theorem~\ref{thm:conv} shows that if a sequence of poles has geometric
convergence rate $\f{1}{n^k}$, then for any $S \in \f{1}{z}\h$ the simple
pole approximation converges to $S$ in the Hardy space norms at the rate
$\f{1}{n^k}$.

\begin{theorem}[Uniform Convergence Rate]\label{thm:conv}
  For some $k > 0$, let $\{\p_n\}_{n=1}^\infty$ be a sequence of poles with
  geometric convergence rate $\f{1}{n^k}$.
  Then for any $S \in \f{1}{z}\h$, there exist constants
  $c_S = c_S(\mathcal{Q},G_{(q,j)}^*) > 0$,
  $c_S' = c_S'(\mathcal{Q},G_{(q,j)}^*) > 0$, and $N > 0$ such that for any
  $n \geq N$ there exist $\{G_p^n\}_{p \in \p_n}$ such that
  $\sum_{p \in \p_n} G_p^n \f{1}{z-p} \in \f{1}{z}\h$ and
  \begin{align*}
    \left|\left|\sum_{p \in \p_n} G_p^n \f{1}{z-p} -S\right|\right|_{\mathcal{H}_2}
    &\leq \f{c_S}{n^k} \\
   \left|\left|\sum_{p \in \p_n} G_p^n \f{1}{z-p} -S\right|\right|_{\mathcal{H}_\infty}
    &\leq \f{c_S'}{n^k}.
  \end{align*}
\end{theorem}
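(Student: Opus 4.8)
The plan is to deduce this uniform rate directly from the single-selection estimate of Theorem~\ref{thm:approx} together with the defining inequality $D(\p_n) \leq \hat{c}_S/n^k$ of the geometric convergence rate. The one genuine difficulty is that Theorem~\ref{thm:approx} presumes Assumption~A1, i.e.\ that the pole selection lies in a fixed ball $\overline{B}_r$ with $r < 1$, whereas a sequence with $D(\p_n) \to 0$ is space-filling by Lemma~\ref{lem:equiv} and therefore eventually leaves every such ball. I would resolve this by discarding the irrelevant far-away poles: only the poles closest to $\mathcal{Q}$ ever enter the approximation, and the constants in Theorem~\ref{thm:approx} depend on the selection only through $r$, so restricting to a fixed ball leaves the error bound intact while rendering its constant independent of $n$. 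This reliance on the uniformity of the Theorem~\ref{thm:approx} constants in the pole positions (for fixed $r$) is the crux of the argument, and the step I expect to be the main obstacle.

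Concretely, fix $S \in \f{1}{z}\h$ and set $\rho = \max_{q \in \mathcal{Q}} |q|$, which satisfies $\rho < 1$ since $S$ is stable and $\mathcal{Q}$ is finite. Choose the fixed radius $r = (1+\rho)/2 \in (\rho,1)$. By the rate hypothesis there is a constant $\hat{c}_S > 0$ with $D(\p_n) \leq \hat{c}_S/n^k$ for all $n$, so I would pick $N$ large enough that $D(\p_n) < r - \rho$ for all $n \geq N$. For such $n$ define the pruned selection $\tilde{\p}_n = \p_n \cap \overline{B}_r$. The key observation is that pruning does not change the approximation: any $p \in \p_n(q)$ satisfies $|p| \leq |q| + D(\p_n) < \rho + (r-\rho) = r$, so it is retained in $\tilde{\p}_n$, while any discarded pole $p'$ has $|p'-q| \geq |p'| - |q| > r - \rho > D(\p_n) \geq |p-q|$ and is thus strictly farther from $q$ than every pole of $\p_n(q)$. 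Hence $\p_n(q)$ remains the set of $m_q$ closest poles in $\tilde{\p}_n$, giving $\tilde{\p}_n(q) = \p_n(q)$ for every $q$ and consequently $D(\tilde{\p}_n) = D(\p_n)$.

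I would then check that $\tilde{\p}_n$ satisfies Assumptions~A1--A4 for this fixed $r$: A1 holds by construction; A3 holds since $\p_n$ and $\overline{B}_r$ are both closed under conjugation; A2 holds because $\tilde{\p}_n \supseteq \p_n(q)$ for the pole of maximal multiplicity, so $|\tilde{\p}_n| \geq m_{\max}$; and A4 holds since $D(\tilde{\p}_n) = D(\p_n) < 1$. Applying Theorem~\ref{thm:approx} to $S$ and $\tilde{\p}_n$ yields matrices $\{G_p\}_{p \in \tilde{\p}_n}$ with $\sum_{p \in \tilde{\p}_n} G_p \f{1}{z-p} \in \f{1}{z}\h$ and errors bounded by $c_S(\mathcal{Q},G_{(q,j)}^*,r)\,D(\tilde{\p}_n)$ and $c_S'(\mathcal{Q},G_{(q,j)}^*,r)\,D(\tilde{\p}_n)$, where, by the remark following Theorem~\ref{thm:approx}, these constants depend on the selection only through the now-fixed $r$ and hence are independent of $n$. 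Setting $G_p^n = G_p$ for $p \in \tilde{\p}_n$ and $G_p^n = 0$ otherwise extends the approximation to all of $\p_n$ without altering it, and substituting $D(\tilde{\p}_n) = D(\p_n) \leq \hat{c}_S/n^k$ produces the claimed $1/n^k$ bounds. The final constants are obtained by absorbing $\hat{c}_S$ into the Theorem~\ref{thm:approx} constants; since $r$ is determined by $\rho = \max_{q}|q|$, a function of $\mathcal{Q}$ alone, the result depends only on $\mathcal{Q}$ and $G_{(q,j)}^*$, matching the stated dependence $c_S(\mathcal{Q},G_{(q,j)}^*)$ and $c_S'(\mathcal{Q},G_{(q,j)}^*)$.
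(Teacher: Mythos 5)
Your proposal is correct and follows essentially the same route as the paper: both restrict the selection to a fixed ball $\overline{B}_r$ with $r = \f{1}{2}\left(1+\max_{q\in\mathcal{Q}}|q|\right)$, verify Assumptions A1--A4 uniformly for $n \geq N$, apply Theorem~\ref{thm:approx} to the pruned selection $\p_n \cap \overline{B}_r$, and combine the resulting bounds with $D(\p_n) \leq \tilde{c}_S/n^k$. The only (harmless) divergence is Assumption A2, where the paper invokes Lemma~\ref{lem:equiv} (space-filling) to guarantee $|\p_n \cap \overline{B}_r| \geq m_{\max}$ for $n$ large, while you deduce it from the containment $\p_n(q) \subset \tilde{\p}_n$ --- implicitly reading $|\p_n| \geq m_{\max}$ into the hypothesis via well-definedness of $D(\p_n)$ --- and, on the other side of the ledger, your explicit verification that pruning leaves each $\p_n(q)$, and hence $D(\p_n)$, unchanged is argued more carefully than the paper's corresponding remark.
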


Note that $N$ in Theorem~\ref{thm:conv} only needs to be chosen to ensure
$D(\p_N) < 1$ and $|\p_N| \geq m_{\max}$,
which if $S$ has only simple poles can be satisfied with only two pairs of
complex conjugate poles (i.e., $N = 4$),
so it tends not to be restrictive in practice.

\subsection{Archimedes Spiral Pole Selection}\label{sec:spiral}

As discussed in Section~\ref{sec:intro}, many control design approaches
to 
mixed $\mathcal{H}_2/\mathcal{H}_\infty$ control design (see \cite[Eq.~2]{Fi22b}
for a detailed formulation)
require Galerkin-type finite dimensional approximations of
Hardy space $\f{1}{z}\h$.
Such methods implicitly seek a ground-truth optimal transfer function lying
in $\f{1}{z}\h$, which is in general unknown a priori.
By Theorem~\ref{thm:dens}, using the simple pole approximation (SPA) with any
space-filling sequence of poles is guaranteed to converge to this optimal
transfer function as the number of poles in the approximation increases.

In this section we provide recommendations for achieving fast
convergence rates, which typically leads to improved performance of the
control design.
Motivated by Section~\ref{sec:dens}, to use SPA
as well as possible the goal is to choose a pole selection
whose poles are as close as possible to the poles of an {\em optimal} transfer
function.
In the cases of system level synthesis (which will be explored further in
Part II \cite{Fi22b}) and input-output parameterization, these optimal
transfer functions are optimal closed-loop responses. 
Let $\mathcal{Q}$ denote the poles of the optimal closed-loop transfer function
from disturbance to control input (see $T_{v \to u}(z)$ in
\cite[Section II.A]{Fi22b} for a detailed definition), and let $\sigma$ denote
the poles of the plant.
When the control synthesis is feasible,
optimal closed-loop transfer functions exist
(see \cite[Assumption~A6]{Fi22b} and the following discussion) and have
a finite collection of poles, all of which are contained in
$\mathcal{Q} \cup \sigma$ (by the proof of \cite[Lemma~1]{Fi22b}).
So, in many cases it is desirable to choose a pole selection as close as
possible to $\mathcal{Q} \cup \sigma$.
Therefore, we recommend to first include the poles of the plant $\sigma$
in SPA, as well as any optimal poles which are
known a priori, such as discussed in Section~\ref{sec:intro}.

Once the prior information about the optimal and plant poles have been
incorporated into SPA, if no information about the remaining optimal
poles is known in advance, a
natural choice for these remaining poles in SPA would be to
distribute the poles evenly over
the unit disk so as to minimize $D(\mathcal{P}_n)$
(i.e., finding $\mathcal{P}_n$ for each $n$ such that $D(\mathcal{P}_n)$ is
minimized for the given number of poles in $\mathcal{P}_n$).
However, finding an exactly even pole distribution over the unit disk
is equivalent to finding the minimum energy configuration of a collection
of identical point charges over a disk.  This is a nonconvex optimization
problem that 
has never been solved in the
general case, requires high computational effort to solve even for smaller
numbers of poles, and for which even when solutions can be obtained
they are not guaranteed to be globally optimal \cite{Nu98}.
Therefore, rather than attempting to find exactly even pole distributions,
we resort to finding approximate solutions instead.
These approximations also have the advantage that, unlike for the exact
solution methods, they can be used to derive approximation error bounds
with a geometric convergence rate based on Theorem~\ref{thm:conv}.

\begin{figure}
  \centering
  \includegraphics[width=0.3\textwidth]{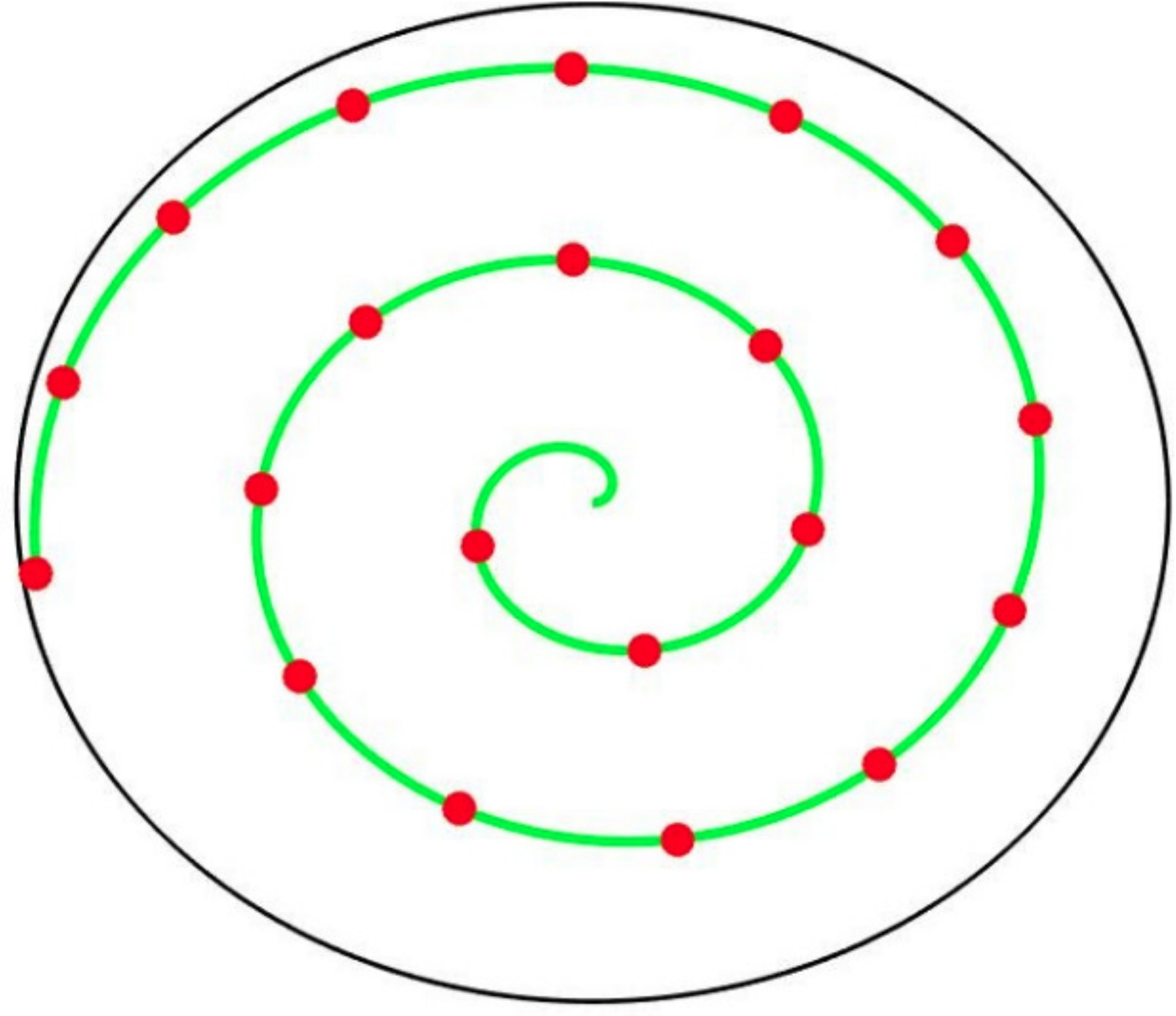}
  \caption{Archimedes spiral (green) with pole selection (red) to distribute
    the poles approximately evenly over the unit disk.  The selection
    in \eqref{eq:poles} also includes the complex conjugates of the pole
    selection shown.}
  \label{fig:spiral}
\end{figure}

One class of heuristic techniques that has been used to generate
appproximately evenly
spaced points over a disk, and that has worked well in practice in a variety
of different fields, is to select points along a spiral \cite{Ga10,Vo79,Ar15},
transforming the pole selection problem from a two-dimensional selection over
a disk to a one-dimensional selection along a spiral curve.
After selecting a particular spiral, typically specified in polar coordinates
by $r = r(\theta)$, appropriate points can then be chosen along the spiral to
yield an approximately even distribution.
Therefore, we choose the poles in SPA along an {\it Archimedes spiral}
according to
the selection proposed in \cite[Section~5.1]{Ar15} and as shown in
Fig.~\ref{fig:spiral}.
This yields 
a powerful heuristic
for minimizing $D(\mathcal{P}_n)$ which tends to work well
in practice (see for example \cite[Section~IV]{Fi22b}), and for
which it is possible
to derive approximation error bounds with a geometric convergence rate
of $\f{1}{n^{1/2}}$ based on the spiral geometry
(see Corollary~\ref{cor:spiral}).
The spiral is then reflected over the imaginary
axis to ensure that $\mathcal{P}_n$ is closed under complex conjugation.
Letting $m$ be the total number of poles in $\mathcal{P}_n$, the poles chosen
for this selection (as shown in Fig.~\ref{fig:spiral}) are given by
\begin{align}
  \begin{split}
  \theta_k &= 2 \sqrt{\pi k}, \\  r_k &= \sqrt{\f{k}{\f{m}{2}+1}}, \\
  p_k &= (r_k,\theta_k),  \\
  p_{-k} &= (r_k,-\theta_k)
  \label{eq:poles}    
  \end{split}
\end{align}
for $k \in \left\{1,...,\f{m}{2}\right\}$.

Theorem~\ref{thm:spiral} shows that the Archimedes spiral pole selection
of \eqref{eq:poles} yields a space-filling sequence of poles with
geometric convergence rate $\f{1}{n^{1/2}}$.

\begin{theorem}[Spiral Geometric Convergence Rate]\label{thm:spiral}
  For each integer $n \geq 2$, let $\p_n$ denote the selection of $(2n-2)$
  poles along an Archimedes spiral given by $p_k$ in \eqref{eq:poles} for
  $k \in \left\{-(n-1),...,-1,1,...,n-1\right\}$.
  Then $\{\p_n\}_{n=2}^\infty$ is a space-filling sequence of poles
  and has geometric convergence rate $\f{1}{n^{1/2}}$.
  In particular, for $S \in \f{1}{z}\h$ it satisfies
  \begin{align*}
    D(\p_n) &\leq \f{c_S}{n^{1/2}} \\
    c_S &= \sqrt{\pi}\left(1 + m_{\max}\right)
  \end{align*}
  for all positive integers $n \geq 2$.
\end{theorem}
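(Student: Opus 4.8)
The plan is to reduce the entire statement to the single quantitative bound $D(\p_n) \le \sqrt{\pi}(1+m_{\max})/\sqrt{n}$. Once this is in hand, letting $n \to \infty$ gives $\lim_{n\to\infty} D(\p_n) = 0$ for every $S \in \f{1}{z}\h$, which is precisely geometric convergence, and by Lemma~\ref{lem:equiv} this is equivalent to $\{\p_n\}_{n=2}^\infty$ being space-filling; the bound itself is the asserted geometric convergence rate $\f{1}{n^{1/2}}$ with constant $c_S = \sqrt{\pi}(1+m_{\max})$. So the real content is the pole-geometry estimate. I would first record that $\{\p_n\}$ is a legitimate sequence of poles: every $p_k$ has modulus $r_k = \sqrt{k/n} \le \sqrt{(n-1)/n} < 1$, so $\p_n \subset \d$, and the pairing $p_{-k} = \overline{p_k}$ makes each $\p_n$ closed under complex conjugation.

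The geometric heart of the argument is to show that the spiral samples fill the disk on the scale $h := \sqrt{\pi/n}$. Writing the selection in the form $r = a\theta$ with $a = 1/(2\sqrt{\pi n})$ (which follows from $\theta_k = 2\sqrt{\pi k}$ and $r_k = \theta_k/(2\sqrt{\pi n})$), the radial gap between successive turns of the Archimedes spiral is the constant $2\pi a = h$, and the along-spiral spacing between consecutive samples is likewise at most $h$ (the tangential contribution $r_k(\theta_{k+1}-\theta_k) \approx h$ dominates). Following \cite[Section~5.1]{Ar15}, I would package these into two estimates valid uniformly in $n$: (i) a covering bound $d(z,\p_n) \le h$ for every $z \in \dc$ (the shortfall of the outermost sample, $1-\sqrt{(n-1)/n} = O(1/n)$, is lower order and harmless), and (ii) a consecutive-spacing bound $\max_k |p_{k+1}-p_k| \le h$.

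With these estimates, the passage from the nearest pole to the $m_q$-th nearest pole is a short combinatorial step. Fix $q \in \mathcal{Q} \subset \d$ and let $p^\star \in \p_n$ satisfy $|q - p^\star| = d(q,\p_n) \le h$ by (i). Walking along the spiral away from $p^\star$ in a feasible direction and collecting $p^\star$ together with the next $m_q$ samples, each step moves by at most $h$ by (ii), so the triangle inequality places all of these poles within $h + m_q h = (1+m_q)h$ of $q$. This exhibits at least $m_q$ poles within distance $(1+m_q)h$ of $q$, whence the $m_q$ genuinely closest poles $\p(q)$ satisfy $\hat{d}(q) = \max_{p \in \p(q)}|p-q| \le (1+m_q)h \le (1+m_{\max})\sqrt{\pi/n}$. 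Taking the maximum over $q \in \mathcal{Q}$ gives $D(\p_n) \le \sqrt{\pi}(1+m_{\max})/\sqrt{n}$. For the finitely many small $n$ where $|\p_n| = 2n-2 < m_{\max}$, the right-hand side already exceeds the diameter $2$ of $\dc$, so the bound is trivially true there.

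I expect the main obstacle to be items (i)--(ii): proving the covering and spacing bounds uniformly in $n$ with clean constants, rather than merely asymptotically. The delicate regions are the center, where the angular increment $\theta_{k+1}-\theta_k = 2\sqrt{\pi}(\sqrt{k+1}-\sqrt{k})$ is large and a naive triangle bound on $|p_{k+1}-p_k|$ overshoots $h$ (so one should use the law of cosines together with the smallness of $r_k$ there), and the outer boundary, where the $O(1/n)$ shortfall must be shown harmless. The combinatorial step and the reduction via Lemma~\ref{lem:equiv} are then routine.
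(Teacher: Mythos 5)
Your proposal follows essentially the same route as the paper: a covering estimate for the disk by the spiral samples, a consecutive-spacing estimate along the spiral, a walk along the spiral to collect the $m_q$ closest poles, and the reduction of space-filling to geometric convergence via Lemma~\ref{lem:equiv}. Your identification of the scale $h=\sqrt{\pi/n}$ (equivalently $2\pi c_m$ with $m=2n-2$) and the final bound $(1+m_{\max})h$ agree exactly with the paper's. The one substantive discrepancy is your covering claim (i): you assert $d(z,\p_n)\le h$ for the \emph{discrete} pole set, which is stronger than what the paper proves and, as you yourself flag, is the hard part. The paper only establishes covering radius $2h$, in two triangle steps: Corollary~\ref{cor:dist} places a point of the \emph{continuous} spiral within $2\pi c_m$ of $z$ (one-sided, inward), and Corollary~\ref{cor:exists} then reaches a sampled pole within another $2\pi c_m$ via a monotonicity argument (Lemma~\ref{lem:mon}, handling the near-center region you correctly identify as delicate, plus a separate Case~2 for $\hat\theta<\theta_1$). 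Your arithmetic happens to absorb this weakening: with covering $2h$ and the tight step count $m_q-1$ (you budgeted $m_q$ steps, one more than needed to collect $m_q$ poles including the nearest), one gets $2h+(m_q-1)h=(1+m_q)h$, the same constant. So you should either prove the sharper two-sided covering bound or, more simply, adopt the paper's $2h$ version. Two smaller points: your spacing bound (ii) must also cover the junction $d(p_1,p_{-1})<h$, since a walk near the positive real axis may cross between the spiral and its conjugate reflection (the paper proves this separately in Corollary~\ref{cor:poles}, treating $-1$ and $1$ as consecutive indices); and the paper's additional bookkeeping with the radius $\hat r$ and the separation $\delta$ (intersecting the selection with $\overline{B}_{\hat r}$, using the one-sidedness $|p_{\hat k}|\le|q|$ or $\hat k=1$) is not needed for the theorem as stated but is set up there for Corollary~\ref{cor:spiral} and Part~II, so its absence from your plan is acceptable for this statement alone. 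Your trivial disposal of the small-$n$ case $2n-2<m_{\max}$ is a reasonable patch on a point the paper leaves implicit.
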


Corollary~\ref{cor:spiral} provides a uniform convergence rate of
$\f{1}{n^{1/2}}$ for
the simple pole approximation using the Archimedes spiral pole selection
of \eqref{eq:poles} to any transfer function in $\f{1}{z}\h$.

\begin{corollary}[Spiral Approximation Convergence Rate]\label{cor:spiral}
  Consider the pole selection of Theorem~\ref{thm:spiral}.
  Then for any $S \in \f{1}{z}\h$ there exist constants
  $c_S = c_S(\mathcal{Q},G_{(q,j)}^*) > 0$,
  $c_S' = c_S'(\mathcal{Q},G_{(q,j)}^*) > 0$, and $N > 0$ such that
  for each integer $n \geq N$ there exist $\{G_p^n\}_{p \in \p_n}$ such that
  \begin{align*}
    \left|\left|\sum_{p \in \p_n} G_p^n \f{1}{z-p} -S\right|\right|_{\mathcal{H}_2}
    &\leq \f{c_S}{n^{1/2}} \\
   \left|\left|\sum_{p \in \p_n} G_p^n \f{1}{z-p} -S\right|\right|_{\mathcal{H}_\infty}
    &\leq \f{c_S'}{n^{1/2}}.
  \end{align*}
\end{corollary}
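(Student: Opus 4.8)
The plan is to obtain Corollary~\ref{cor:spiral} as an immediate composition of the two preceding results, since all of the substantive work has already been carried out in Theorem~\ref{thm:spiral} (the geometry of the spiral) and Theorem~\ref{thm:conv} (the translation of a geometric convergence rate into Hardy space approximation rates). No new analysis of the spiral or of the simple pole approximation is required; the corollary is a pure specialization.

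First I would invoke Theorem~\ref{thm:spiral} to assert that the Archimedes spiral pole selection of \eqref{eq:poles}, indexed so that $\p_n$ consists of the $(2n-2)$ poles for $k \in \{-(n-1),\dots,-1,1,\dots,n-1\}$, is a space-filling sequence with geometric convergence rate $\f{1}{n^{1/2}}$. Concretely, this supplies for each $S \in \f{1}{z}\h$ the explicit bound $D(\p_n) \leq \f{c_S}{n^{1/2}}$ with $c_S = \sqrt{\pi}(1+m_{\max})$, which places the sequence $\{\p_n\}$ exactly within the hypothesis of Theorem~\ref{thm:conv} under the choice $k = \f{1}{2}$.

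Second I would apply Theorem~\ref{thm:conv} with $k = \f{1}{2}$. This directly produces constants $c_S = c_S(\mathcal{Q},G_{(q,j)}^*) > 0$ and $c_S' = c_S'(\mathcal{Q},G_{(q,j)}^*) > 0$, a threshold $N > 0$, and for each $n \geq N$ coefficient matrices $\{G_p^n\}_{p \in \p_n}$ with $\sum_{p \in \p_n} G_p^n \f{1}{z-p} \in \f{1}{z}\h$ satisfying the two stated inequalities, with right-hand sides $\f{c_S}{n^{1/2}}$ and $\f{c_S'}{n^{1/2}}$. This is precisely the conclusion of the corollary.

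The only point requiring verification is the existence of the threshold $N$, for which by the remark following Theorem~\ref{thm:conv} it suffices to choose $N$ large enough that $D(\p_N) < 1$ (Assumption A4) and $|\p_N| = 2N-2 \geq m_{\max}$ (Assumption A2). The former holds because $D(\p_n) \to 0$ by Theorem~\ref{thm:spiral}, and the latter holds for all sufficiently large $N$ since the pole count $2n-2$ grows without bound. Hence there is no genuine obstacle: the corollary follows by specializing Theorem~\ref{thm:conv} to the spiral rate $k = \f{1}{2}$ established in Theorem~\ref{thm:spiral}.
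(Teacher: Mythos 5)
Your proposal is correct and follows exactly the paper's own two-step argument: Theorem~\ref{thm:spiral} establishes the geometric convergence rate $\f{1}{n^{1/2}}$ for the spiral pole selection, and Theorem~\ref{thm:conv} with $k = \f{1}{2}$ then yields the Hardy space bounds. Your additional verification of the threshold $N$ via the remark after Theorem~\ref{thm:conv} is consistent with (and slightly more explicit than) the paper's treatment, which handles this inside the proofs of Theorems~\ref{thm:conv} and~\ref{thm:spiral}.
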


Note that $N$ in Corollary~\ref{cor:spiral} is chosen only to satisfy the
conditions in the remark after Theorem~\ref{thm:conv}, and so is typically
small in practice (i.e., $N \approx 4$).

For comparison, Galerkin-type approximations based on Lorentz series,
such as FIR, often have exponential convergence rates of the form
$\f{1}{\rho^n}$ for some $\rho \in (0,1)$, 
which is a faster convergence rate than in Corollary~\ref{cor:spiral}.
However, this rate $\rho$ typically depends on the decay rate of the optimal
transfer function, which may be very slow in practice, especially for systems
with large separation of time scales.
Additionally, for this convergence rate to apply, it may require $n$
sufficiently large such that $\rho^n$ has decayed sufficiently
(see \cite[Theorem~4.7]{An19}),
and so may not apply in practice
(see for example \cite[Section~IV]{Fi22b}).
In contrast, the convergence rate in Corollary~\ref{cor:spiral} is uniform
over all $S \in \f{1}{z}\h$, and so does not depend on the decay rate of the
optimal transfer function, and therefore tends to work well even for systems
with large separation of time scales (see for example \cite[Section IV]{Fi22b}).
In addition, if some optimal poles can be included in SPA due to prior
knowledge (see Section~\ref{sec:intro}), this will typically have the effect
of decreasing the constants $c_S$ and $c_S'$, as well as reducing $D(\p_n)$,
resulting in faster convergence.
For approximations based on Lorentz series, it is unclear how such prior
information could be included to improve the convergence rate.
Furthermore, when the optimal transfer function being approximated represents
a closed-loop response, such as for system level synthesis and input-output
parameterization, Lorentz series approximations are generally unable to
include poles of the plant which are stable but uncontrollabe.
In such cases, the control design may be infeasible, so additional slack
variables and constraint violations would have to be introduced
\cite[Eq.~4.36]{An19}, further
reducing the convergence rate to the optimal transfer function
\cite[Theorem~4.7]{An19}.
For SPA, the poles of the plant can be automatically incorporated, so
feasibility of the control design is guaranteed for stabilizable plants,
and the reductions in convergence rate resulting from introducing slack
variables can be avoided.

\section{Proofs}\label{sec:proof}

\subsection{Proof of Theorem~\ref{thm:approx}}

The proof of Theorem~\ref{thm:approx} proceeds by bounding the error in
the simple pole approximation in terms of the distance between the
approximating poles and the repeated poles they are approximating.
The next two lemmas provide useful identities for developing approximation
error bounds for a single repeated pole in the SISO case.

\begin{lemma}\label{lem:first}
  For any positive integer $m$, let $p_1,...,p_m, q \in \mathbb{D}$, and let
  $I_m = \{1, 2, \hdots, m\}$.  Then
  \begin{align*}
    \left|q^m - \prod_{i=1}^m p_i \right|
    \leq \sum_{k=1}^m \sum_{\substack{S \subset I_m \\ |S| = k}} |q|^{m-k}
    \prod_{i \in S} |p_i-q|.
  \end{align*}
\end{lemma}

\begin{proof}[Proof of Lemma~\ref{lem:first}]
  We compute
  \begin{align*}
    \prod_{i=1}^m p_i = \prod_{i=1}^m ((p_i-q) + q)
    \overset{\substack{\text{distributive} \\ \text{property}}}{=}
    \sum_{k=0}^m \sum_{\substack{S \subset I_m \\ |S| = k}} q^{m-k}
    \prod_{i \in S}(p_i-q)
  \end{align*}
  which implies that
  \begin{align*}
    \left|q^m - \prod_{i=1}^m p_i\right|
    &
    \overset{\substack{\text{above} \\ \text{identity}}}{=}
    \left|q^m -\sum_{k=0}^m \sum_{\substack{S \subset I_m \\ |S| = k}} q^{m-k}
    \prod_{i \in S}(p_i-q)\right| \\
    &
    \overset{\substack{\text{canceling} \\ q^m}}{=}
    \left|-\sum_{k=1}^m \sum_{\substack{S \subset I_m \\ |S| = k}} q^{m-k}
    \prod_{i \in S}(p_i-q)\right| \\
    &
    \overset{\substack{\text{triangle} \\ \text{inequality}}}{\leq}
    \sum_{k=1}^m \sum_{\substack{S \subset I_m \\ |S| = k}} |q|^{m-k}
    \prod_{i \in S} |p_i-q|
  \end{align*}
  by the triangle inequality.
\end{proof}

\begin{lemma}\label{lem:second}
  For any positive integer $m$, let $p_1,...,p_m, q \in \d$, and
  let $z \in \c$.
  Then there exist constants $c_1,...,c_m$ such that
  \begin{align}
    \begin{split}
    &\left| \sum_{i=1}^m c_i \f{1}{z-p_i} - \f{1}{(z-q)^m} \right| \\
    & \quad \quad
    \leq  \f{\sum\limits_{k=1}^m \sum\limits_{\substack{S \subset I_m \\ |S| = k}} 
      \sum\limits_{i=1}^k \sum\limits_{\substack{T \subset S \\ |T| = i}} |q|^{k-i}
      \prod\limits_{j \in T}|p_j-q|}
      {d(q,\c)^m \prod\limits_{i=1}^m d(p_i,\c)}.
    \end{split}
    \label{eq:sums}
  \end{align}
\end{lemma}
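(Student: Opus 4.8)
The plan is to take the constants $c_1,\dots,c_m$ to be the partial fraction coefficients of the product $\prod_{i=1}^m \f{1}{z-p_i}$. Since the approximating poles are distinct (as guaranteed in the application, where $p_1,\dots,p_m$ are the distinct poles of $\p(q)$), this product admits the exact expansion $\prod_{i=1}^m \f{1}{z-p_i} = \sum_{i=1}^m c_i \f{1}{z-p_i}$ with $c_i = \prod_{j \neq i} \f{1}{p_i-p_j}$, valid for all $z$. Consequently the quantity to be bounded becomes $\left| \prod_{i=1}^m \f{1}{z-p_i} - \f{1}{(z-q)^m} \right|$, and placing both terms over the common denominator $(z-q)^m \prod_{i=1}^m(z-p_i)$ rewrites it as $\f{|(z-q)^m - \prod_{i=1}^m(z-p_i)|}{|z-q|^m \prod_{i=1}^m |z-p_i|}$.

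First I would dispatch the denominator. Since $z \in \partial\mathbb{D}$ while each $p_i, q \in \mathbb{D}$, the definition of the distance from a point to a set gives $|z-p_i| \geq d(p_i,\partial\mathbb{D})$ and $|z-q| \geq d(q,\partial\mathbb{D})$ immediately. This lower-bounds the denominator by $d(q,\partial\mathbb{D})^m \prod_{i=1}^m d(p_i,\partial\mathbb{D})$, which already matches the denominator on the right-hand side of \eqref{eq:sums}.

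The heart of the argument is the numerator. I would expand both $(z-q)^m$ and $\prod_{i=1}^m(z-p_i)$ in powers of $z$ via elementary symmetric functions: the coefficient of $z^{m-k}$ in the first is $(-1)^k\binom{m}{k}q^k = (-1)^k \sum_{|S|=k} q^k$, and in the second is $(-1)^k \sum_{|S|=k}\prod_{j\in S} p_j$, so the difference collapses termwise to $(z-q)^m - \prod_{i=1}^m(z-p_i) = \sum_{k=1}^m (-1)^k z^{m-k} \sum_{|S|=k}\bigl(q^k - \prod_{j\in S} p_j\bigr)$, where the $k=0$ term vanishes because both contributions equal $1$. Taking absolute values, using $|z|=1$, and applying the triangle inequality yields $\left| (z-q)^m - \prod_{i=1}^m(z-p_i) \right| \leq \sum_{k=1}^m \sum_{|S|=k} \bigl| q^k - \prod_{j\in S} p_j \bigr|$. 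Finally I would apply Lemma~\ref{lem:first} to each inner term $\bigl| q^k - \prod_{j\in S} p_j \bigr|$ over the index set $S$ of size $k$, which reproduces precisely the quadruple sum in the numerator of \eqref{eq:sums}. Combining with the denominator bound gives the claim.

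The main obstacle is recognizing the symmetric-function expansion in the third step: seeing that grouping by the power $z^{m-k}$ organizes the difference into the subset sum $\sum_{|S|=k}\bigl(q^k - \prod_{j\in S} p_j\bigr)$, to which Lemma~\ref{lem:first} applies subset-by-subset. The nested $(S,T)$ structure of the target bound is exactly this outer sum over $S$ feeding an inner application of Lemma~\ref{lem:first}; once this combinatorial bookkeeping is set up, the remaining estimation on the unit circle is routine.
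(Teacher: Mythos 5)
Your proposal is correct and follows essentially the same route as the paper's proof: the same partial fraction choice $c_i = \prod_{j \neq i}(p_i - p_j)^{-1}$, the same common-denominator rewriting, the same symmetric-function expansion of the numerator grouped by powers of $z^{m-k}$, the same subset-by-subset application of Lemma~\ref{lem:first}, and the same denominator bound via $d(\cdot,\partial\mathbb{D})$. The only cosmetic difference is that you factor out $(-1)^k$ before invoking Lemma~\ref{lem:first} on $q$ and $\{p_j\}_{j \in S}$, whereas the paper applies it to $-q$ and $\{-p_j\}$ directly; these are trivially equivalent, and your explicit flagging of the distinctness of the $p_i$ (needed for the $c_i$ to exist) matches the paper's implicit use of that fact.
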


\begin{proof}[Proof of Lemma~\ref{lem:second}]
  First we show there exist constants $\{c_i\}_{i=1}^m$ such that
  the following partial fraction decomposition holds:
  \begin{align}
    \sum\nolimits_{i=1}^m c_i \f{1}{z-p_i} = \f{1}{\prod_{i=1}^m (z-p_i)}.
    \label{eq:cci}
  \end{align}
  Multiplying both sides by the product $\prod_{i=1}^m (z-p_i)$
  and evaluating at $z = p_i$ gives
  \begin{align}
    c_i &= \f{1}{\prod_{\substack{j=1 \\ j \neq i}}^m (p_i-p_j)}
    \label{eq:ci}
  \end{align}
  for all $i \in I_m$, which satisfy \eqref{eq:cci}.
So, after choosing constants $\{c_i\}_{i=1}^m$ by \eqref{eq:ci}, to prove the
claim it suffices to show that
$\left|\f{1}{\prod_{i=1}^m (z-p_i)} - \f{1}{(z-q)^m} \right|$
satisfies the inequality of Lemma~\ref{lem:second}.
We compute
\begin{align}
  \f{1}{\prod\limits_{i=1}^m (z-p_i)} - \f{1}{(z-q)^m}
  &=
  \f{(z-q)^m - \prod\limits_{i=1}^m (z-p_i)}{(z-q)^m\prod\limits_{i=1}^m (z-p_i)}.
  \label{eq:obv}
\end{align}
Then 
\begin{align*}
  &(z-q)^m - \prod\limits_{i=1}^m (z-p_i) \\
  &
  \overset{\substack{\text{binomial} \\ \text{theorem}}}{=}
  \sum_{k=0}^m {m \choose k}
  z^{m-k}(-q)^k - \sum_{k=0}^m \sum_{\substack{S \subset I_m \\ |S| =
      k}} z^{m-k} \prod_{j \in S} (-p_j)
  \\ &
  \overset{\substack{\text{canceling} \\ z^m}}{=}  
  \sum_{k=1}^m z^{m-k} {m \choose k} (-q)^k
  - \sum_{k=1}^m z^{m-k} \sum_{\substack{S \subset I_m \\ |S| = k}} 
  \prod_{j \in S} (-p_j)
  \\ &
  \overset{\substack{\text{difference} \\ \text{of sums}}}{=}
  \sum_{k=1}^m
  z^{m-k} \sum_{\substack{S \subset I_m \\ |S| = k}} \left((-q)^k -
  \prod_{j \in S} (-p_j) \right)
\end{align*}
where the last equality follows since
the number of terms in $\sum_{\substack{S \subset I_m \\ |S| = k}}$ is the
number of ways to select $k$ poles out of $m$ poles, which is
${m \choose k}$.
Then, recalling that $z \in \c$ so $|z| = 1$, and
applying Lemma~\ref{lem:first} to poles $-q$ and $\{-p_i\}_{i=1}^k$ for each
set $S$ in the sum yields
\begin{align}
  &\left|(z-q)^m - \prod\limits_{i=1}^m (z-p_i)\right| \nonumber \\
  &
  \overset{\substack{\text{triangle} \\ \text{inequality}}}{\leq}
  \sum_{k=1}^m |z|^{m-k} \sum_{\substack{S \subset I_m \\ |S| = k}} \left|(-q)^k
  - \prod_{j \in S} (-p_j) \right| \nonumber \\
  &
  \overset{|z| \leq 1}{\leq}
  \sum_{k=1}^m \sum_{\substack{S \subset I_m \\ |S| = k}} \left|(-q)^k
  - \prod_{j \in S} (-p_j) \right| \nonumber \\
  &
  \overset{\text{Lemma}~\ref{lem:first}}{\leq}  
  \sum_{k=1}^m \sum_{\substack{S \subset I_m \\ |S| = k}} 
  \sum_{i=1}^k \sum_{\substack{T \subset S \\ |T| = i}} |q|^{k-i}
  \prod_{j \in T}|p_j-q|.
  \label{eq:top}
\end{align}
Furthermore, since $z \in \c$
\begin{align}
  \left|(z-q)^m\prod_{i=1}^m (z-p_i)\right|
  \geq d(q,\c)^m \prod_{i=1}^m d(p_i,\c). \label{eq:bottom}
\end{align}
Thus, taking the absolute value of \eqref{eq:obv} and combining with
\eqref{eq:top}-\eqref{eq:bottom}
we obtain the desired bound.
\end{proof}

Corollary~\ref{cor:siso} provides an approximation error bound for a single
repeated pole in the SISO case.

\begin{corollary}\label{cor:siso}
 Let $p_1,...,p_m \in \overline{B}_r$ for some $r \in (0,1)$,
  $q \in \d$, 
  $z \in \c$, and $\p = \cup_{i=1}^m p_i$.
  Let $\hat{d}(q) = \max_i |p_i-q|$
  and assume that $\hat{d}(q) < 1$.
  Let $\bullet \in \{2,\infty\}$. 
  Then there exist constants $c_1,...,c_m$ such that
  \begin{align*}
    &\left|\left| \sum\nolimits_{i=1}^m c_i \f{1}{z-p_i} - \f{1}{(z-q)^m}
    \right|\right|_{H_\bullet} \\
    & \quad \quad
    \leq \left(\f{(|q|+2)^m-(|q|+1)^m}
           {d(q,\c)^m(1-r)^m}\right)\hat{d}(q).
  \end{align*}
\end{corollary}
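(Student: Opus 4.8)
The plan is to convert the pointwise estimate of Lemma~\ref{lem:second} into a uniform bound on the circle that controls both norms at once. First I would fix the coefficients $c_1,\dots,c_m$ by \eqref{eq:ci}, so that Lemma~\ref{lem:second} applies directly and yields, for every $z \in \c$, a bound whose right-hand side is the nested sum of $|q|^{k-i}\prod_{j\in T}|p_j-q|$ divided by $d(q,\c)^m \prod_{i=1}^m d(p_i,\c)$. The crucial observation is that this right-hand side does not depend on $z$, so it is simultaneously a uniform pointwise bound over $\c$.

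Next I would estimate the denominator and the individual products in the numerator. Since each $p_i \in \overline{B}_r$ satisfies $|p_i| \le r$, every factor obeys $d(p_i,\c) \ge 1-r$, hence $\prod_{i=1}^m d(p_i,\c) \ge (1-r)^m$. For the numerator, for any index set $T$ with $|T| = i \ge 1$ I would use $\prod_{j\in T}|p_j-q| \le \hat{d}(q)^{i}$ from the definition $\hat{d}(q) = \max_i |p_i-q|$, and then $\hat{d}(q)^{i} \le \hat{d}(q)$ because $i \ge 1$ and $0 \le \hat{d}(q) < 1$. This factors a single $\hat{d}(q)$ out of the entire numerator, which is exactly the linear dependence on $\hat{d}(q)$ asserted in the statement.

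The remaining step is purely combinatorial: the number of sets $S$ with $|S| = k$ is ${m \choose k}$, and for each such $S$ the number of subsets $T$ with $|T| = i$ is ${k \choose i}$, so the numerator sum collapses by two applications of the binomial theorem,
\begin{align*}
  \sum_{k=1}^m {m \choose k} \sum_{i=1}^k {k \choose i}\, |q|^{k-i}
  = \sum_{k=1}^m {m \choose k}\big[(|q|+1)^k - |q|^k\big]
  = (|q|+2)^m - (|q|+1)^m.
\end{align*}
Combining the three estimates gives the uniform pointwise bound $\big((|q|+2)^m-(|q|+1)^m\big)\hat{d}(q)\big/\big(d(q,\c)^m(1-r)^m\big)$ on $\c$.

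Since this bound is constant in $z$, it is immediately an upper bound for $\|\cdot\|_{\mathcal{H}_\infty} = \sup_{z\in\c}|\cdot|$ of the approximation error, which settles the case $\bullet = \infty$; and since in the SISO case the $\mathcal{H}_2$ norm is the average of $|\cdot|^2$ against the normalized measure on $\c$, we have $\|\cdot\|_{\mathcal{H}_2} \le \|\cdot\|_{\mathcal{H}_\infty}$, which settles the case $\bullet = 2$. I expect the main obstacle to be the bookkeeping in the combinatorial collapse — in particular verifying that the inner sum starts at $i = 1$ (so the $|q|^k$ term is subtracted) and that the outer sum likewise omits $k = 0$, which is precisely what turns the two binomial expansions into the difference $(|q|+2)^m - (|q|+1)^m$ rather than a single power.
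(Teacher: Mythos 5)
Your proposal is correct and follows essentially the same route as the paper's proof: fix the coefficients by \eqref{eq:ci}, bound the numerator of Lemma~\ref{lem:second} using $\prod_{j \in T}|p_j-q| \le \hat{d}(q)^i \le \hat{d}(q)$ together with two binomial-theorem collapses, bound the denominator below by $d(q,\c)^m(1-r)^m$ since $\p \subset \overline{B}_r$, take the supremum over $z \in \c$ for the $\mathcal{H}_\infty$ case, and invoke $||\cdot||_{\mathcal{H}_2} \le ||\cdot||_{\mathcal{H}_\infty}$ for SISO transfer functions for the $\mathcal{H}_2$ case. The only cosmetic difference is the order of operations --- you factor out the single power of $\hat{d}(q)$ before collapsing the sums, whereas the paper carries $\hat{d}(q)^i$ through the collapse to $(|q|+1+\hat{d}(q))^m - (|q|+1)^m$ and only then applies $\hat{d}(q)^k \le \hat{d}(q)$ --- and both orderings produce the identical constant $\left((|q|+2)^m - (|q|+1)^m\right)\hat{d}(q)$.
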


\begin{proof}[Proof of Corollary~\ref{cor:siso}]
  To prove the claim, we upper bound the right-hand side of \eqref{eq:sums}.
  We compute 
  \begin{align}
    &\sum\limits_{k=1}^m \sum\limits_{\substack{S \subset I_m \\ |S| = k}} 
      \sum\limits_{i=1}^k \sum\limits_{\substack{T \subset S \\ |T| = i}} |q|^{k-i}
      \prod\limits_{j \in T}|p_j-q| \nonumber \\
    &
    \overset{\substack{\text{bound} \\ \text{by } \hat{d}(q)}}{\leq}
    \sum\limits_{k=1}^m \sum\limits_{\substack{S \subset I_m \\ |S| = k}} 
    \sum\limits_{i=1}^k \sum\limits_{\substack{T \subset S \\ |T| = i}}
    |q|^{k-i}\hat{d}(q)^i
    \nonumber \\
    &
    \overset{\substack{\text{combining} \\ \text{equal terms}}}{=}
    \sum\limits_{k=1}^m \sum\limits_{\substack{S \subset I_m \\ |S| = k}} 
    \sum\limits_{i=1}^k {k \choose i} |q|^{k-i}\hat{d}(q)^i
    \nonumber \\
    &
    \overset{\substack{\text{binomial} \\ \text{theorem}}}{=}
    \sum\limits_{k=1}^m \sum\limits_{\substack{S \subset I_m \\ |S| = k}}
    \left((|q|+\hat{d}(q))^k - |q|^k\right)
    \nonumber \\
    &
    \overset{\substack{\text{combining} \\ \text{equal terms}}}{=}
    \sum\limits_{k=1}^m {m \choose k} \left((|q|+\hat{d}(q))^k - |q|^k\right)
    \nonumber \\
    &
    \overset{\substack{\text{binomial} \\ \text{theorem}}}{=}
    \left((|q| + \hat{d}(q) + 1)^m - 1\right) - \left((|q|+1)^m - 1\right)
    \nonumber \\
    &
    \overset{\substack{\text{canceling} \\ 1}}{=}
    (|q|+1+\hat{d}(q))^m - (|q|+1)^m \nonumber \\
    &
    \overset{\substack{\text{binomial} \\ \text{theorem}}}{=}
    \sum_{k=0}^m {m \choose k} (|q|+1)^{m-k}\hat{d}(q)^k -  (|q|+1)^m \nonumber
    \\
    &
    \overset{\substack{\text{canceling} \\ (|q|+1)^m}}{=}
    \sum_{k=1}^m {m \choose k} (|q|+1)^{m-k}\hat{d}(q)^k \nonumber \\
    &
    \overset{\hat{d}(q)^k \leq \hat{d}(q)}{\leq}
    \left(\sum_{k=1}^m {m \choose k} (|q|+1)^{m-k}\right)\hat{d}(q)
    \nonumber\\
    &
    \overset{\substack{\text{binomial} \\ \text{theorem}}}{=}
    \left((|q|+2)^m - (|q|+1)^m\right)\hat{d}(q)
    \label{eq:K}
  \end{align}
  since $\hat{d}(q)^k \leq \hat{d}(q)$ because $\hat{d}(q) < 1$.
  Note that
  \begin{align*}
    d(q,\c)^m\prod_{i=1}^m d(p_i,\c) &\geq d(q,\c)^md(\p,\c)^m \\
    &\geq d(q,\c)^m(1-r)^m \\
  \end{align*}
  since $\p \subset \overline{B}_r$.
  Choose constants $\{c_i\}_{i=1}^m$ as in \eqref{eq:ci} in
    the proof of Lemma~\ref{lem:second}.
  Then applying the above inequalities to the result of Lemma~\ref{lem:second},
  we obtain
  \begin{align*}
    \left| \sum_{i=1}^m c_i \f{1}{z-p_i} - \f{1}{(z-q)^m} \right|
    &\leq \f{(|q|+2)^m - (|q|+1)^m}{d(q,\c)^m(1-r)^m} \hat{d}(q).
  \end{align*}
  Taking the supremum over $z \in \c$ yields
  \begin{align*}
    &\left|\left| \sum_{i=1}^m c_i \f{1}{z-p_i} - \f{1}{(z-q)^m}
    \right|\right|_{\mathcal{H}_\infty}
    \\&= \sup_{z \in \c}
    \left| \sum_{i=1}^m c_i \f{1}{z-p_i} - \f{1}{(z-q)^m} \right|
    \\&\leq \f{(|q|+2)^m - (|q|+1)^m}{d(q,\c)^m(1-r)^m} \hat{d}(q).
  \end{align*}
  Finally, for any 
  $T \in \f{1}{z}\h$ SISO,
  $||T||_{H_2} \leq ||T||_{\mathcal{H}_\infty}$ which, combined with the above
  inequality,
    yields the desired bound.
\end{proof}

Theorem~\ref{thm:approx} extends the approximation error bound to an arbitrary
number of (possibly repeated) poles and to the MIMO case.

\begin{proof}[Proof of Theorem~\ref{thm:approx}]
The proof begins by writing the partial fraction decomposition of $T$, and
for each pole $q$ of $T$ using the SISO approximation of
Corollary~\ref{cor:siso} to approximate each SISO term $\f{1}{(z-q)^j}$ with
the $j$ nearest poles in $\p$.
These are then combined with the MIMO coefficients in the partial fraction
decomposition of $T$ to yield the approximating transfer function
$\sum_{p \in \p} G_p \f{1}{z-p}$.
Care must be taken to ensure symmetry between approximations of complex
conjugate poles in $T$ so that the approximating transfer function has real
coefficients.
Next, using the SISO approximation error bounds of Corollary~\ref{cor:siso},
the main approximation error bounds of the corollary are derived.
Finally, it is shown that, by construction, the approximating transfer function
does indeed have real coefficients, and therefore belongs to $\f{1}{z}\h$.

We begin by writing the partial fraction decomposition of $T$ and constructing
the approximating transfer function.
Let $\mathcal{Q}_{\mathbb{R}} \subset \mathcal{Q}$ denote the real poles of
$T$, and let $\mathcal{Q}_{\mathbb{C}} \subset \mathcal{Q}$ denote the remaining
poles.
Since $T \in \f{1}{z}\h$, we can write its partial fraction decomposition
with matrix-valued coefficients $C_{(q,j)}$ as
\begin{align}
  T(z) = \sum_{q \in \mathcal{Q}_{\mathbb{R}}} \sum_{j=1}^{m_q} C_{(q,j)} \f{1}{(z-q)^j}
  + \sum_{q \in \mathcal{Q}_{\mathbb{C}}} \sum_{j=1}^{m_q} C_{(q,j)} \f{1}{(z-q)^j}
  \label{eq:Tpfd}
\end{align}
For each $q \in \mathcal{Q}$ and each $j \in I_{m_q}$,
let $\p(q,j) \subset \p(q)$ denote the $j$ closest poles in $\p$ to $q$
(or at least one choice in case this is not unique)
and choose constants $\{c_p^{(q,j)}\}_{p \in \p(q,j)}$ as in
Corollary~\ref{cor:siso} for approximating the pole $q$ with multiplicity $j$
by $\p(q,j)$.
Then, choose $\p(\overline{q},j) = \overline{\p(q,j)}$ (note that by
symmetry, since $\p$ and $\mathcal{Q}$ are both closed under complex
conjugation,
these are the $j$ closest poles in $\p$ to $\overline{q}$) and choose
constants $\{c_p^{(\overline{q},j)}\}_{p \in \p(\overline{q},j)}$ as in
Corollary~\ref{cor:siso} for approximating the pole $\overline{q}$ with
multiplicity $j$ by
$\p(\overline{q},j)$.
Note that in case $q$ is real, for each $j \in I_m$ this results in two
sets of constants for approximating $q$: one for $\p(q,j)$ and one for
$\p(\overline{q},j) = \overline{\p(q,j)}$ (where we have abused notation for
simplicity of presentation).
Then for each $p \not\in \p(q,j)$, let $c_p^{(q,j)} = 0$, and 
for each $p \not\in \p(\overline{q},j)$,
let $c_p^{(\overline{q},j)} = 0$ for notational convenience.
For each $q \in \mathcal{Q}$ and each $j \in I_{m_q}$,
let $\hat{d}(q,j) = \max_{p \in \p(q,j)} |p-q|$.
Then $\hat{d}(q,j) \leq \hat{d}(q) \leq D(\p)$.
By Corollary~\ref{cor:siso}, this implies that for each $q \in \mathcal{Q}$
and each $j \in I_{m_q}$
\begin{align}
  \begin{split}
    &\left|\left| \sum_{p \in \p} c_p^{(q,j)} \f{1}{z-p} - \f{1}{(z-q)^j}
    \right|\right|_{\mathcal{H}_\infty} \\
    & \quad \quad
    \overset{\text{Corollary}~\ref{cor:siso}}{\leq}
    \left(\f{(|q|+2)^j-(|q|+1)^j}
         {d(q,\c)^j(1-r)^j}\right) \hat{d}(q,j) \\
    & \quad \quad
    \overset{\hat{d}(q,j) \leq D(\p)}{\leq}
    \left(\f{(|q|+2)^j-(|q|+1)^j}
         {d(q,\c)^j(1-r)^j}\right) D(\p)
   \label{eq:cp}
  \end{split}
\end{align}
and similarly for $\overline{q}$.
Then for each $p \in \p$, define
\begin{align}
  \begin{split}
  G_p &= \sum_{q \in \mathcal{Q}_{\mathbb{R}}} \sum_{j=1}^{m_q}
    C_{(q,j)} \f{c_p^{(q,j)} + c_p^{(\overline{q},j)}}{2}
    + \sum_{q \in \mathcal{Q}_{\mathbb{C}}} \sum_{j=1}^{m_q} C_{(q,j)} c_p^{(q,j)}.
    \label{eq:gp}
    \end{split}
\end{align}
This completes the construction of the approximating transfer function
$\sum_{p \in \p} G_p \f{1}{z-p}$.

Next we show that the approximating transfer function satisfies the
desired approximation error bounds of \eqref{eq:mimo_approx}.
We compute 
\begin{align*}
  &\left|\left|
  \sum_{p \in \p} G_p \f{1}{z-p} - T
  \right|\right|_{\mathcal{H}_\infty} \\
  &
  \overset{\substack{\eqref{eq:Tpfd} \\ \eqref{eq:gp}}}{=}
  \left|\left|
  \sum_{q \in \mathcal{Q}_{\mathbb{R}}} \sum_{j=1}^{m_q} C_{(q,j)}
  \left(\sum_{p \in \p} \f{c_p^{(q,j)}+c_p^{(\overline{q},j)}}{2}
  \f{1}{z-p} -\f{1}{(z-q)^j} \right)\right.\right.\\
  & \left.\left.+ \sum_{q \in \mathcal{Q}_{\mathbb{C}}} \sum_{j=1}^{m_q} C_{(q,j)}
  \left(\sum_{p \in \p} c_p^{(q,j)} \f{1}{z-p} -\f{1}{(z-q)^j} \right)  
  \right|\right|_{\mathcal{H}_\infty} \\
  &
  \overset{\substack{\text{triangle} \\ \text{inequality}}}{\leq}
  \mkern-10mu
  \sum_{q \in \mathcal{Q}_{\mathbb{R}}} \sum_{j=1}^{m_q}
  ||C_{(q,j)}||_2 \f{1}{2}
  \left|\left|
  \sum_{p \in \p} c_p^{(q,j)}
  \f{1}{z-p} - \f{1}{(z-q)^j}
  \right|\right|_{\mathcal{H}_\infty} \\
  &+ \sum_{q \in \mathcal{Q}_{\mathbb{R}}} \sum_{j=1}^{m_q}
  ||C_{(q,j)}||_2 \f{1}{2}
  \left|\left|
   \sum_{p \in \p} c_p^{(\overline{q},j)}
  \f{1}{z-p} - \f{1}{(z-\overline{q})^j}
  \right|\right|_{\mathcal{H}_\infty} \\
  &+ \sum_{q \in \mathcal{Q}_{\mathbb{C}}} \sum_{j=1}^{m_q}
  ||C_{(q,j)}||_2
  \left|\left|
   \sum_{p \in \p} c_p^{(q,j)}
  \f{1}{z-p} - \f{1}{(z-q)^j}
  \right|\right|_{\mathcal{H}_\infty}
  \\
  &
  \overset{\eqref{eq:cp}}{\leq}
  \sum_{q \in \mathcal{Q}_{\mathbb{R}}} \sum_{j=1}^{m_q} ||C_{(q,j)}||_2 \f{1}{2}
  \f{(|q|+2)^j-(|q|+1)^j}{d(q,\c)^j(1-r)^j} D(\p) \\
  &+ \sum_{q \in \mathcal{Q}_{\mathbb{R}}} \sum_{j=1}^{m_q} ||C_{(q,j)}||_2 \f{1}{2}
  \f{(|q|+2)^j-(|q|+1)^j}{d(q,\c)^j(1-r)^j} D(\p) \\
  &+ \sum_{q \in \mathcal{Q}_{\mathbb{C}}} \sum_{j=1}^{m_q} ||C_{(q,j)}||_2
  \f{(|q|+2)^j-(|q|+1)^j}{d(q,\c)^j(1-r)^j} D(\p)
  \\
  &
  \overset{\substack{\text{combining} \\ \text{equal terms}}}{=}
  \sum_{q \in \mathcal{Q}} \sum_{j=1}^{m_q} ||C_{(q,j)}||_2
  \f{(|q|+2)^j-(|q|+1)^j}{d(q,\c)^j(1-r)^j} D(\p) \\
  &= K_\infty D(\p) \\
  &K_\infty = \sum_{q \in \mathcal{Q}} \sum_{j=1}^{m_q} ||C_{(q,j)}||_2
  \f{(|q|+2)^j-(|q|+1)^j}{d(q,\c)^j(1-r)^j}.
\end{align*}
This gives the result for the $\mathcal{H}_\infty$ case.
Let $s$ be the minimum of the dimensions of $T$,
and note that
\begin{align*}
  ||T||_{\mathcal{H}_2} \leq \sqrt{s} ||T||_{\mathcal{H}_\infty}
  \leq \sqrt{s} K_\infty D(\p)
  = K_2 D(\p)
\end{align*}
where $K_2 = \sqrt{s} K_\infty$.  This completes the result for the
$\mathcal{H}_2$ case.

Finally, we show that $\sum_{p \in \p} G_p \f{1}{z-p} \in \f{1}{z} \h$.
Clearly $\sum_{p \in \p} G_p \f{1}{z-p}$ is strictly proper, rational,
and stable, so it suffices to show it has real coefficients.
Since $T$ has real coefficients it satisfies
$T(z) = \overline{T(\overline{z})}$ which implies, by matching coefficients
in the partial fraction decomposition and since
$\p(\overline{q},j) = \overline{\p(q,j)}$ for all $q \in \mathcal{Q}$
and $j \in I_m$
\begin{align}
  C_{(\overline{q},j)} = \overline{C_{(q,j)}}
  \label{eq:cq1}
\end{align}
for all $q \in \mathcal{Q}_{\mathbb{C}}$ and $j \in I_{m_q}$, and that
\begin{align}
  \overline{C_{(q,j)}} = C_{(q,j)}
  \label{eq:cq2}
\end{align}
for all $q \in \mathcal{Q}_{\mathbb{R}}$ and $j \in I_{m_q}$.
Furthermore, using the definition of $\{c_p^{(q,j)}\}_{p \in \p(q,j)}$ from
\eqref{eq:ci} and since
$\p(\overline{q},j) = \overline{\p(q,j)}$ for all $q \in \mathcal{Q}$
and $j \in I_m$, it is straightforward to verify that
\begin{align}
  c_{\overline{p}}^{(\overline{q},j)} = \overline{c_p^{(q,j)}}
  \label{eq:cpq}
\end{align}
for all $q \in \mathcal{Q}$ and $j \in I_{m_q}$.
It is also straightforward to verify 
that for any complex-valued
matrix $M$ and pole $p \in \d$, 
$M \f{1}{z-p} + \overline{M} \f{1}{z-\overline{p}}$ can
be expressed as a rational transfer function matrix with real coefficients
(we will refer to this later as fact (a)).
Let $\mathcal{Q}_+ \subset \mathcal{Q}_{\mathbb{C}}$ be the subset of poles with
nonnegative imaginary component, and similarly for $\p_+ \subset \p$.
Then 
\begin{align*}
  &\sum_{p \in \p} G_p \f{1}{z-p} \\
  &
  \overset{\eqref{eq:gp}}{=}
  \sum_{p \in \p_+} \sum_{q \in \mathcal{Q}_{\mathbb{R}}} \sum_{j=1}^{m_q} C_{(q,j)}
  \left(\f{c_p^{(q,j)} + c_p^{(\overline{q},j)}}{2} \f{1}{z-p} \right. \\
  & \quad \quad \quad \quad \quad \quad \quad \quad \quad \quad
  \left.+ \f{c_{\overline{p}}^{(q,j)} + c_{\overline{p}}^{(\overline{q},j)}}{2}
  \f{1}{z-\overline{p}} \right)\\
    &+ \sum_{p \in \p}
    \sum_{q \in \mathcal{Q}_+} \sum_{j=1}^{m_q} \left(C_{(q,j)} c_p^{(q,j)} \f{1}{z-p}
    + C_{(\overline{q},j)} c_p^{(\overline{q},j)} \f{1}{z-\overline{p}}\right) \\
  &
  \overset{\substack{\text{regrouping} \\ \text{terms}}}{=}
    \sum_{p \in \p_+} \sum_{q \in \mathcal{Q}_{\mathbb{R}}} \sum_{j=1}^{m_q} C_{(q,j)}
    \left(\f{c_p^{(q,j)}}{2} \f{1}{z-p} + \f{c_{\overline{p}}^{(\overline{q},j)}}{2}
    \f{1}{z-\overline{p}}
    \right. \\
  & \quad \quad \quad \quad \quad \quad \quad \quad \quad \quad
  \left.+ \f{c_p^{(\overline{q},j)}}{2} \f{1}{z-p}
  + \f{c_{\overline{p}}^{(q,j)}}{2}
  \f{1}{z-\overline{p}} \right)\\
    &+ \sum_{p \in \p}
    \sum_{q \in \mathcal{Q}_+} \sum_{j=1}^{m_q} \left(C_{(q,j)} c_p^{(q,j)} \f{1}{z-p}
    + C_{(\overline{q},j)} c_p^{(\overline{q},j)} \f{1}{z-\overline{p}}\right) \\
  &
  \overset{\substack{\eqref{eq:cq1} \\ \eqref{eq:cpq}}}{=}      
    \sum_{p \in \p_+} \sum_{q \in \mathcal{Q}_{\mathbb{R}}} \sum_{j=1}^{m_q} C_{(q,j)}
    \left(\f{c_p^{(q,j)}}{2} \f{1}{z-p} + \f{\overline{c_p^{(q,j)}}}{2}
    \f{1}{z-\overline{p}}
    \right. \\
  & \quad \quad \quad \quad \quad \quad \quad \quad \quad \quad
  \left.+ \f{c_p^{(\overline{q},j)}}{2} \f{1}{z-p}
  + \f{\overline{c_p^{(\overline{q},j)}}}{2}
  \f{1}{z-\overline{p}} \right)\\
    &+ \sum_{p \in \p}
    \sum_{q \in \mathcal{Q}_+} \sum_{j=1}^{m_q} \left(C_{(q,j)} c_p^{(q,j)} \f{1}{z-p}
    + \overline{C_{(q,j)} c_p^{(q,j)}} \f{1}{z-\overline{p}}\right)
\end{align*}
which, by fact (a) above and \eqref{eq:cq2},
is a sum of rational transfer functions with real coefficients,
hence it is rational with real coefficients.
\end{proof}

\subsection{Proofs of Theorem~\ref{thm:dens} and Theorem~\ref{thm:conv}}

This section will first use the approximation error bounds of
Theorem~\ref{thm:approx} to prove that the space of simple pole approximations
converges to the Hardy space $\f{1}{z}\h$ in Theorem~\ref{thm:dens}.
Then, Theorem~\ref{thm:approx} is combined with the notion of geometric
convergence rate to provide a uniform convergence rate in
Theorem~\ref{thm:conv} for the simple pole approximation to any transfer
function in $\f{1}{z}\h$.
The first step is to show the equivalence of space-filling and exhibiting
geometric convergence pole sequences in Lemma~\ref{lem:equiv}.

\begin{proof}[Proof of Lemma~\ref{lem:equiv}]
  To show that geometric
  convergence implies space-filling, a finite set
  of poles with small Hausdorff distance from the unit disk is chosen.
  Then, selecting a transfer function whose poles contain this pole selection,
  and using that the pole sequence exhibits geometric convergence,
  it is possible to choose a pole in $\p_n$ close to each pole in the
  pole selection.
  This will result in the Hausdorff distance between $\p_n$ and the pole
  selection being small, and hence small Hausdorff distance between $\p_n$
  and the unit disk, which will lead to the space-filling property.
  Next, assuming that the sequence of poles is space-filling, to show that
  it exhibits geometric convergence we begin by selecting an arbitrary transfer
  function.  Then, letting its maximum multiplicity be $m_{\max}$, $m_{\max}$
  poles near each of its poles are chosen in the unit disk.  Using the
  space-filling property, poles in $\p_n$ are found close to each of these
  poles in the unit disk, and hence close to each pole of the transfer function,
  establishing geometric convergence.
  
  Let $\{\p_n\}_{n=1}^\infty$ be a sequence of poles.    
  First suppose that $\{\p_n\}_{n=1}^\infty$ exhibits geometric convergence.
  To show that $\p_n \to \dc$, it suffices to show for every $\epsilon > 0$
  there exists $N$ such that $n \geq N$ implies that
  $d_H(\p_n,\dc) \leq \epsilon$.
  So, let $\epsilon > 0$.
  Choose any finite set of poles $\mathcal{Q} \subset \d$ such that
  $\mathcal{Q}_{\f{\epsilon}{2}} \supset \dc$, and note that this implies that
  $\cup_{q \in \mathcal{Q}} \{q\}_{\f{\epsilon}{2}} \supset \dc$.
  Choose $S \in \f{1}{z}\h$ which has simple poles at $\mathcal{Q}$
  (and possibly additional poles as well, such as the complex conjugate
  of $\mathcal{Q}$).
  Since $\{\p_n\}_{n=1}^\infty$ exhibits geometric convergence,
  there exists $N$ such that $n \geq N$ implies that
  $D(\p_n) < \f{\epsilon}{2}$.
  In particular, this implies that for any $n \geq N$ and every
  $q \in \mathcal{Q}$, there exists $p^{(q)} \in \p_n$ such that
  $|p^{(q)}-q| < \f{\epsilon}{2}$, which implies that
  $\{p^{(q)}\}_\epsilon \supset \{q\}_{\f{\epsilon}{2}}$.
  Thus, for every $n \geq N$,
  \begin{align*}
    (\p_n)_\epsilon \supset \cup_{q \in \mathcal{Q}} \{p^{(q)}\}_\epsilon
    \supset \cup_{q \in \mathcal{Q}} \{q\}_{\f{\epsilon}{2}} \supset \dc.
  \end{align*}
  As $\p_n \subset \dc$ for all $n$, this implies that for $n \geq N$,
  $d_H(\p_n,\dc) \leq \epsilon$.
  So, $\{\p_n\}_{n=1}^\infty$ is space-filling.

  Next suppose that $\{\p_n\}_{n=1}^\infty$ is space-filling.
  Let $S \in \f{1}{z}\h$.
  Let $\mathcal{Q}$ be the poles of $S$, and let
  $m_{\max} = \max_{q \in \mathcal{Q}} m_q$ be their maximum multiplicity, which
  is finite since $S$ is rational.
  To show that $\lim_{n \to \infty} D(\p_n) = 0$, it suffices to show that for
  every $\epsilon > 0$ there exists $N$ such that $n \geq N$ implies that
  $D(\p_n) \leq \epsilon$.
  So, let $\epsilon > 0$.
  For each $q \in \mathcal{Q}$, choose $q^{(1)},...,q^{(m_{\max})} \subset \d$
  such that for all $i, j \in \{1,...,m_{\max}\}$ with $j \neq i$,
  $|q-q^{(i)}| \leq \epsilon$ and $|q^{(i)}-q^{(j)}| > \f{\epsilon}{2 m_{\max}}$
  (one way to do this is to choose all of the $q^{(i)}$ along a line passing
  through $q$ such that $|q^{(i)}-q^{(i+1)}| = \f{\epsilon}{m_{\max}}$
  and $|q-q^{(i)}| = i \f{\epsilon}{m_{\max}}$ for all $i \in \{1,...,m_{\max}\}$).
  Since $\{\p_n\}_{n=1}^\infty$ is space-filling, there exists $N$ such that
  $n \geq N$ implies that $d_H(\p_n,\dc) < \f{\epsilon}{2 m_{\max}}$.
  Therefore, for each $n \geq N$, $q \in \mathcal{Q}$, and
  $i \in \{1,...,m_{\max}\}$ there exists $p^{(q,i)}_n \in \p_n$ such that
  $|q^{(i)}-p^{(q,i)}_n| < \f{\epsilon}{2 m_{\max}}$.
  By the choice of the $\{q^{(i)}\}_{i=1}^{m_{\max}}$, this implies that
  $\{p^{(q,i)}_n\}_{i=1}^{m_{\max}} \subset \p_n$ consists of $m_{\max}$ distinct
  poles, all of which satisfy $|p^{(q,i)}_n-q| \leq \epsilon$.
  Thus,
  \begin{align*}
    \hat{d}(q) = \max_{p \in \p_n(q)} |p-q| \leq
    \max_{i=1}^{m_{\max}} |p^{(q,i)}_n-q| \leq \epsilon
  \end{align*}
  since $\p_n(q)$ are the $m_q$ closest poles in $\p_n$ to $q$.
  As $q \in \mathcal{Q}$ was arbitrary, this implies that for all $n \geq N$,
  \begin{align*}
    D(\p_n) = \max_{q \in \mathcal{Q}} \hat{d}(q) \leq \epsilon.
  \end{align*}
  So, $\{\p_n\}_{n=1}^\infty$ exhibits geometric convergence.
\end{proof}

Now we are ready to prove Theorem~\ref{thm:dens}.

\begin{proof}[Proof of Theorem~\ref{thm:dens}]
  The proof begins by fixing an arbitary transfer function in $\f{1}{z}\h$,
  and uses Theorem~\ref{thm:approx} to obtain approximation error bounds
  using the simple pole approximation with poles $\p_n$ for each $n$.
  Some initial work is required to ensure the assumptions of
  Theorem~\ref{thm:approx} are met uniformly by $\p_n$ for all $n$
  sufficiently large.
  As the sequence of poles is space filling, by Lemma~\ref{lem:equiv} it
  exhibits geometric convergence, so taking the limit of the approximation
  error bounds as $n \to \infty$ implies convergence of the simple pole
  approximations to the desired transfer function.
  This implies that this transfer function is contained in the
  $\liminf_{n \to \infty}$ of the simple pole
  approximation spaces and, since it was arbitrary, that this
  $\liminf_{n \to \infty}$ is equal to $\f{1}{z}\h$.
  As the
  $\liminf_{n \to \infty}$ is contained in the $\limsup_{n \to \infty}$ is
  contained in $\f{1}{z}\h$,
  this completes the proof.
  
  Let $S \in \f{1}{z}\h$ and let $\mathcal{Q}$ be the poles of $S$.
  Let $r = \f{1}{2}\left(1+\max_{q \in \mathcal{Q}} |q|\right) \in (0,1)$.
  As $\{\p_n\}_{n=1}^\infty$ is space-filling, by Lemma~\ref{lem:equiv} it
  exhibits geometric convergence, so $\lim_{n \to \infty} D(\p_n) = 0$.
  Thus, there exists $N$ such that $n \geq N$ implies that
  $D(\p_n) < 1$ and
  $D(\p_n) < \f{1}{2}\left(1-\max_{q \in \mathcal{Q}} |q|\right)$.
  This implies that for $n \geq N$, the poles from $\p_n$ used to approximate
  $\mathcal{Q}$ in the simple pole approximation of Theorem~\ref{thm:approx}
  all lie within
  \begin{align*}
    \overline{B}_{\max_{q \in \mathcal{Q}} |q| + D(\p_n)}
    \subset \overline{B}_{\f{1}{2}\left(1 + \max_{q \in \mathcal{Q}} |q|\right)}
    = \overline{B}_r.
  \end{align*}
  Furthermore, since $\{\p_n\}_{n=1}^\infty$ is space-filling,
  increasing $N$ further if necessary implies that for $n \geq N$,
  $\p_n \cap \overline{B}_r$ contains at least $m_{\max}$ poles.
  Thus, for every $n \geq N$, $\p_n \cap \overline{B}_r$ is contained in
  $\overline{B}_r$ with $r \in (0,1)$ (Assumption A1),
  contains at least $m_{\max}$ poles
  (Assumption A2), is closed under complex conjugation (Assumption A3),
  and satisfies $D(\p_n) < 1$ (Assumption A4).
  So, by Theorem~\ref{thm:approx}, there exist constants $c_S, c_S' > 0$
  such that for every $n \geq N$, there exist coefficients
  $\{G_p^n\}_{p \in \p_n}$ such that
  \begin{align*}
    \left|\left| \sum_{p \in \p_n} G_p^n \f{1}{z-p} - S
    \right|\right|_{\mathcal{H}_2}
    &\leq c_S D(\p_n), \\
    \left|\left| \sum_{p \in \p_n} G_p^n \f{1}{z-p} - S
    \right|\right|_{\mathcal{H}_\infty}
    &\leq c_S' D(\p_n).
  \end{align*}
  Thus, taking the limit as $n \to \infty$ implies
  \begin{align*}
    \lim_{n \to \infty}  \left|\left| \sum_{p \in \p_n} G_p^n \f{1}{z-p} - S
    \right|\right|_{\mathcal{H}_2} &\leq c_S \lim_{n \to \infty} D(\p_n) = 0, \\
    \lim_{n \to \infty} \left|\left| \sum_{p \in \p_n} G_p^n \f{1}{z-p} - S
    \right|\right|_{\mathcal{H}_\infty}
    &\leq c_S' \lim_{n \to \infty} D(\p_n) = 0.
  \end{align*}
  Therefore, $\lim_{n \to \infty} \sum_{p \in \p_n} G_p^n \f{1}{z-p} = S$
  with respect to the metrics induced by both the $\mathcal{H}_2$ and
  $\mathcal{H}_\infty$ norms.
  As $\sum_{p \in \p_n} G_p^n \f{1}{z-p} \in \mathcal{A}_n$ for all $n \geq N$,
  and $\lim_{n \to \infty} \sum_{p \in \p_n} G_p^n \f{1}{z-p} = S$,
  $S \in \liminf_{n \to \infty} \mathcal{A}_n$.
  As $S \in \f{1}{z}\h$ was arbitrary,
  $\liminf_{n \to \infty} \mathcal{A}_n = \f{1}{z}\h$.
  By definition,
  \begin{align*}
    \f{1}{z}\h = \liminf_{n \to \infty} \mathcal{A}_n \subset
    \limsup_{n \to \infty} \mathcal{A}_n \subset \f{1}{z}\h
  \end{align*}
  so
  \begin{align*}
    \f{1}{z}\h = \liminf_{n \to \infty} \mathcal{A}_n = \limsup_{n \to \infty}
    \mathcal{A}_n = \lim_{n \to \infty} \mathcal{A}_n.
  \end{align*}
\end{proof}

Next we prove Theorem~\ref{thm:conv}.

\begin{proof}[Proof of Theorem~\ref{thm:conv}]
  The proof proceeds by combining the approximation error bounds of
  Theorem~\ref{thm:approx} with the definition of geometric convergence rate.
  Some initial work is required to ensure that the assumptions of
  Theorem~\ref{thm:approx} are met uniformly by $\p_n$ for all $n$
  sufficiently large.
  
  Let $S \in \f{1}{z}\h$, let $\mathcal{Q}$ be the poles of $S$, and let
  $m_{\max}$ be their maximum multiplicity.
  Let $r_0 = \f{1}{2}\left(1 + \max_{q \in \mathcal{Q}} |q|\right)$.
  Since $\{\p_n\}_{n=1}^\infty$ is a sequence of poles with
  geometric convergence rate $\f{1}{n^k}$, it exhibits geometric convergence.
  So, there exists $N'$ such that $n \geq N'$ implies that
  $D(\p_n) < \f{1}{2}\left(1 - \max_{q \in \mathcal{Q}} |q| \right)$.
  Thus, for $n \geq N'$, the poles from $\p_n$ used to approximate
  $\mathcal{Q}$ in the simple pole approximation of Theorem~\ref{thm:approx}
  all lie within
  \begin{align*}
    \overline{B}_{\max_{q \in \mathcal{Q}} |q| + D(\p_n)}
    \subset \overline{B}_{\f{1}{2}\left(1 + \max_{q \in \mathcal{Q}} |q|\right)}
    = \overline{B}_{r_0}.
  \end{align*}
  For $n \in \{1,...,N'-1\}$, let $r_n = \max_{p \in \p_n} |p|$, so
  $\p_n \cap \overline{B}_{r_n} = \p_n$.
  Let $r = \max_{n=0}^{N'-1} r_n$.
  Then for all $n$, the poles from $\p_n$ used to approximate $\mathcal{Q}$
  in the simple pole approximation of Theorem~\ref{thm:approx} all lie within
  $\p_n \cap \overline{B}_r$ (Assumption A1).
  Furthermore, $\p_n \cap \overline{B}_r$ is closed under complex conjugation
  for all $n$ (Assumption A3).
  As $\{\p_n\}_{n=1}^\infty$ exhibits geometric convergence, there exists $N$
  such that $n \geq N$ implies that $D(\p_n) < 1$ (Assumption A4).
  As $\{\p_n\}_{n=1}^\infty$ exhibits geometric convergence, by
  Lemma~\ref{lem:equiv} it is space-filling.
  So, increasing $N$ if necessary implies that for $n \geq N$,
  $|\p_n \cap \overline{B}_r| \geq m_{\max}$ (Assumption A2).
  Note that $N$ is chosen just to ensure that $D(\p_n) < 1$
  and that $\p_n$ has at least $m_{\max}$ poles.
  As Assumptions A1-A4 are satisfied for $\p_n \cap \overline{B}_r$
  with $n \geq N$, by Theorem~\ref{thm:approx} there exist
  constants $\hat{c}_S, \hat{c}_S' > 0$
  such that for every $n \geq N$, there exist coefficients
  $\{G_p^n\}_{p \in \p_n}$ such that
  \begin{align*}
    \left|\left| \sum_{p \in \p_n} G_p^n \f{1}{z-p} - S
    \right|\right|_{\mathcal{H}_2}
    &\leq \hat{c}_S D(\p_n), \\
    \left|\left| \sum_{p \in \p_n} G_p^n \f{1}{z-p} - S
    \right|\right|_{\mathcal{H}_\infty}
    &\leq \hat{c}_S' D(\p_n).
  \end{align*}
  As $\{\p_n\}_{n=1}^\infty$ is a sequence of poles with
  geometric convergence rate $\f{1}{n^k}$, there exists a constant
  $\tilde{c}_S > 0$
  such that for all $n$,
  \begin{align*}
    D(\p_n) \leq \f{\tilde{c}_S}{n^k}.
  \end{align*}
  Combining this with the approximation error bounds above yields
  \begin{align*}
  \left|\left| \sum_{p \in \p_n} G_p^n \f{1}{z-p} - S
    \right|\right|_{\mathcal{H}_2}
    &\leq \f{c_S}{n^k} \\
    \left|\left| \sum_{p \in \p_n} G_p^n \f{1}{z-p} - S
    \right|\right|_{\mathcal{H}_\infty}
    &\leq \f{c_S'}{n^k} \\
    c_S &= \hat{c}_S \tilde{c}_S \\
    c_S' &= \hat{c}_S' \tilde{c}_S
  \end{align*}
  for all $n \geq N$.
\end{proof}

\subsection{Proofs of Theorem~\ref{thm:spiral} and Corollary~\ref{cor:spiral}}

This section will prove Theorem~\ref{thm:spiral} and Corollary~\ref{cor:spiral}.
To achieve an approximately uniform selection of poles over the unit disk,
the basic idea is to choose a spiral whose windings are equally spaced apart,
and then to select poles along this spiral such that the distance between
any two successive poles is close to but less than the distance between the
windings.
Such a selection is shown in Fig.~\ref{fig:spiral} and first appeared in
\cite[Section 5.1]{Ar15}, although we are not aware of any prior uses for
control design.
Note that as the number of poles increases, the spiral also changes so that
the distance between windings (and, hence, between successive poles)
converges towards zero.
The main challenge is to show that the chosen pole selection along this spiral
does in fact possess these geometric properties.
Then, these will be used to show that as the number of poles increases,
$D(\p)$ converges to zero at the same
rate at which the distance between windings converges to zero.

Lemma~\ref{lem:arch} and Corollary~\ref{cor:dist} provide basic geometric facts
about the Archimedes spiral as parameterized in \eqref{eq:poles}.
Lemma~\ref{lem:arch} appears without proof in \cite[p. 18]{Ar15}.

\begin{lemma}\label{lem:arch}
The Archimedes spiral given in polar coordinates by
$r = c \theta$ for a fixed $c > 0$ has a constant distance
between its windings of $2 \pi c$.
\end{lemma}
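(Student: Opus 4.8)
The plan is to interpret the ``distance between windings'' as the radial separation between consecutive intersections of the spiral with a fixed ray emanating from the origin, and then to show that this separation is both constant in the winding index and independent of the choice of ray. This is the natural reading that yields an exactly constant value; it should be contrasted with the perpendicular (shortest) distance between successive loops, which for an Archimedes spiral is only asymptotically constant and therefore would not give the sharp claim.

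First I would fix an arbitrary polar angle $\theta_0 \in [0, 2\pi)$ and consider the ray from the origin in that direction. Since the spiral $r = c\theta$ is parameterized by $\theta \in [0,\infty)$ and the polar angle is periodic with period $2\pi$, the curve meets this ray exactly at the parameter values $\theta = \theta_0 + 2\pi k$ for nonnegative integers $k$. The point of the $k$-th intersection lies at radius
\begin{align*}
  r_k = c(\theta_0 + 2\pi k).
\end{align*}

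Next, since all of these intersection points lie on the same ray through the origin, the Euclidean distance between the $k$-th and $(k+1)$-th windings is simply the difference of their radii:
\begin{align*}
  r_{k+1} - r_k
    = c\bigl(\theta_0 + 2\pi(k+1)\bigr) - c\bigl(\theta_0 + 2\pi k\bigr)
    = 2\pi c.
\end{align*}
This value is independent of both the winding index $k$ and the ray angle $\theta_0$, which establishes that the distance between windings is the constant $2\pi c$.

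There is essentially no technical obstacle in the computation itself; the only point requiring care is pinning down the convention for ``distance between windings.'' Adopting the radial-distance convention along a fixed ray is precisely what produces the exactly constant value $2\pi c$ asserted in the statement, and it is also the geometrically relevant quantity for the subsequent pole-spacing estimates (leading to the bound on $D(\p)$) that are built on this spiral. I would therefore make this convention explicit at the outset so that the one-line radial computation suffices.
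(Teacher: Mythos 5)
Your proposal is correct and matches the paper's proof in essence: the paper also takes two points on successive windings at the same polar angle, $p_1 = (c\theta_1,\theta_1)$ and $p_2 = (c(\theta_1+2\pi),\theta_1+2\pi)$, and computes their distance via the law of cosines, which (since $\cos(2\pi)=1$) collapses to exactly your difference-of-radii calculation yielding $2\pi c$. Your explicit remark that the claim uses the radial (along-a-ray) convention rather than the perpendicular distance between loops is a sensible clarification, but the underlying argument is the same.
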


\begin{proof}[Proof of Lemma~\ref{lem:arch}]
Two points on successive windings of the spiral are given by
$p_1 = (c \theta_1,\theta_1)$ and $p_2 = (c (\theta_1+2\pi),\theta_1+2\pi)$
for some $\theta_1 \geq 0$.
Then, using the law of cosines to compute distances in polar coordinates yields
\begin{align*}
  &d(p_1,p_2)^2 = (c\theta_1)^2 + (c(\theta_1+2\pi))^2 \\
  & \mkern+75mu
  - 2(c\theta_1)(c(\theta_1+2\pi))\cos(\theta_1+2\pi - \theta_1) \\
  &= c^2\theta_1^2 + c^2(\theta_1^2 + 4\pi \theta_1 + 4\pi^2)
  - 2c^2(\theta_1^2 + 2\pi \theta_1)
  = 4 \pi c^2
\end{align*}
so, taking the square root implies $d(p_1,p_2) = 2 \pi c$.
\end{proof}

\begin{corollary}\label{cor:dist}
For any $z \in \d$, there exists a point $p$ on the Archimedes spiral
$r = c \theta$ for a fixed $c > 0$ such that $|p| \leq |z|$ and
$d(z,p) < 2 \pi c$.
\end{corollary}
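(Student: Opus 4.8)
The plan is to reduce the two-dimensional problem to a one-dimensional one along the radial ray through $z$, exploiting the constant winding distance $2\pi c$ established in Lemma~\ref{lem:arch}. Writing $z = |z|e^{i\phi}$ with $\phi \in [0,2\pi)$ (and disposing of $z=0$ immediately by taking $p$ to be the origin, which lies on the spiral at $\theta = 0$), I would examine where the spiral $r = c\theta$, $\theta \ge 0$, meets the ray $R = \{te^{i\phi}: t \ge 0\}$. Since $e^{i\theta} = e^{i\phi}$ exactly when $\theta \equiv \phi \pmod{2\pi}$, these intersection points are precisely $p_k = c(\phi + 2\pi k)e^{i\phi}$ for integers $k \ge 0$, and their radii $c(\phi + 2\pi k)$ form an arithmetic progression with common difference $2\pi c$, consistent with Lemma~\ref{lem:arch}.

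First I would treat the generic case $c\phi \le |z|$. Here the set of intersection radii not exceeding $|z|$ is nonempty (it contains $k=0$), so I let $k^*$ be the largest $k \ge 0$ with $c(\phi + 2\pi k) \le |z|$ and set $p = p_{k^*}$. By construction $|p| \le |z|$, and since $p$ and $z$ lie on the same ray, $d(z,p) = |z| - |p|$. Maximality of $k^*$ forces the next radius to overshoot, $|p| + 2\pi c = c(\phi + 2\pi(k^*+1)) > |z|$, which gives exactly $d(z,p) = |z| - |p| < 2\pi c$, as required.

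The delicate point, and the step I expect to be the main obstacle, is the complementary case $c\phi > |z|$, where the innermost crossing of the ray already lies strictly outside the circle of radius $|z|$, so no intersection point on $R$ satisfies $|p| \le |z|$. The resolution is to observe that this regime can only occur near the origin: since $\phi < 2\pi$, the inequality $c\phi > |z|$ forces $|z| < 2\pi c$. I would then simply take $p$ to be the origin, which lies on the spiral, satisfies $|p| = 0 \le |z|$, and obeys $d(z,p) = |z| < 2\pi c$. Combining the two cases covers every $z \in \mathbb{D}$ and completes the argument.
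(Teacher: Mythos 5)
Your proposal is correct and follows essentially the same route as the paper: intersect the spiral with the radial ray through $z$, observe the crossing radii form an arithmetic progression with common difference $2\pi c$ (Lemma~\ref{lem:arch}), and take the largest crossing with radius at most $|z|$, so that collinearity gives $d(z,p) = |z|-|p| < 2\pi c$. In fact you are slightly more careful than the paper: its proof asserts the existence of $\theta_1 \geq 0$ with $r \in (c\theta_1, c(\theta_1+2\pi))$ and $\theta_1 = \theta + 2\pi n$, which silently fails when $0 < |z| < c\theta$, precisely the corner case you resolve explicitly by taking $p$ to be the origin.
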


\begin{proof}[Proof of Corollary~\ref{cor:dist}]
Fix $z = (r,\theta) \in \d$ with $\theta \in [0,2\pi)$.
If $|z| = 0$ or $z$ is in the spiral, then the distance to the spiral is zero.
Therefore, assume that $z$ is not in the spiral.
Then there exists $\theta_1 \geq 0$ and an integer $n$ such that
$r \in (c \theta_1, c (\theta_1+2\pi))$ and $\theta_1 = \theta + 2 \pi n$.
Let $p_1 = (c \theta_1,\theta_1)$ and note that $p_1$ is in the spiral
and $|p_1| = c \theta_1 < r = |z|$.
By the above, we have
\begin{align*}
 &d(z,p_1)^2 = r^2 + (c\theta_1)^2 - 2r(c\theta_1)\cos(\theta + 2 \pi n - \theta)
  \\
  &= (r-c\theta_1)^2
  < (c(\theta_1+2\pi) - c\theta_1)^2
  = (2\pi c)^2.
\end{align*}
So taking the square root implies that $d(z,p_1) < 2 \pi c$.
As $|p_1| < |z|$ and $p_1$ is in the spiral, the claim follows.
\end{proof}

Corollary~\ref{cor:poles} and Corollary~\ref{cor:exists} above establish basic
geometric properties
about the particular pole selection along the spiral shown in
Fig.~\ref{fig:spiral}.
Lemma~\ref{lem:mon} below provides a technical result that is needed for the
proof of Corollary~\ref{cor:poles}.
For any positive even integer $m$,
let $c_m = \f{1}{2\sqrt{\pi\f{m+2}{2}}}$.  Then every pole
in the selection \eqref{eq:poles} lies along the Archimedes spiral
$r = c_m \theta$, as shown in Fig.~\ref{fig:spiral}.

\begin{lemma}\label{lem:mon}
For $x \in [1,\infty)$ and $m$ a positive even integer, let
\begin{align*}
  &g(x) = \\
  &\f{1}{\f{m}{2}+1} \left[2x+1 - 2\sqrt{x(x+1)}
    \cos(2\sqrt{\pi}(\sqrt{x+1}-\sqrt{x}))\right].
\end{align*}
Then $\lim_{x \to \infty} g(x) = (2 \pi c_m)^2$ and $g$ is monotonically
increasing over $x \in [1,\infty)$, so $g(x) < (2 \pi c_m)^2$ for all
  $x \in [1,\infty)$.
\end{lemma}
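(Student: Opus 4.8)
The plan is to strip away the common factor and reduce both claims to a single-variable function. Writing
$h(x) = 2x+1 - 2\sqrt{x(x+1)}\cos\!\big(2\sqrt{\pi}(\sqrt{x+1}-\sqrt{x})\big)$
for the bracketed quantity, we have $g(x) = \f{h(x)}{\f{m}{2}+1}$, and since $c_m = \f{1}{2\sqrt{\pi(\f{m}{2}+1)}}$ gives $(2\pi c_m)^2 = \f{\pi}{\f{m}{2}+1}$, the two stated facts are equivalent to $\lim_{x\to\infty} h(x) = \pi$ and $h$ being increasing on $[1,\infty)$; the bound $g(x) < (2\pi c_m)^2$ then follows at once from these. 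Geometrically, $h(x)$ is exactly the squared chord length between the successive poles $p_x$ and $p_{x+1}$ computed by the law of cosines (as in Lemma~\ref{lem:arch}), and $(2\pi c_m)^2$ is the squared winding distance of the enclosing spiral $r = c_m\theta$, which is why one expects $h$ to approach but remain below $\pi$.

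The key simplification I would use is the substitution $u = \sqrt{\pi}(\sqrt{x+1}-\sqrt{x})$, a strictly decreasing bijection from $[1,\infty)$ onto $(0,u_{\max}]$ with $u_{\max} = \sqrt{\pi}(\sqrt{2}-1) < \f{\pi}{2}$. Setting $s = \sqrt{x+1}+\sqrt{x}$, so that $2x+1 = \f{s^2 + s^{-2}}{2}$, $2\sqrt{x(x+1)} = \f{s^2 - s^{-2}}{2}$, and the cosine argument equals $2\sqrt{\pi}/s = 2u$, the half-angle identities $1-\cos=2\sin^2$ and $1+\cos=2\cos^2$ collapse $h$ into the clean form
\[
  h = \pi\,\f{\sin^2 u}{u^2} + \f{u^2\cos^2 u}{\pi}.
\]
From here the limit is immediate: as $x\to\infty$ we have $u\to 0$, $\f{\sin^2 u}{u^2}\to 1$, and the second term vanishes, so $h\to\pi$.

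For monotonicity, since $u$ decreases as $x$ increases, it suffices to show $u\mapsto \pi\,\f{\sin^2 u}{u^2} + \f{u^2\cos^2 u}{\pi}$ is strictly decreasing on $(0,u_{\max}]$. Differentiating, the first term is decreasing (its derivative carries the factor $u\cos u - \sin u < 0$) while the second is increasing on this range, so the derivative being negative reduces to the single inequality $\pi^2\sin u\,(\sin u - u\cos u) > u^4\cos u\,(\cos u - u\sin u)$. I would prove this by bounding each side: the right side is at most $u^4$ since $0 < \cos u \leq 1$ and $0 < \cos u - u\sin u \leq 1$, while the left side is at least $2\pi\!\left(\f{u^4}{3} - \f{u^6}{30}\right)$ using Jordan's inequality $\sin u \geq \f{2u}{\pi}$ together with $\sin u - u\cos u = \int_0^u t\sin t\,dt \geq \f{u^3}{3} - \f{u^5}{30}$. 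On $(0,u_{\max}]$ this reduces to $2\pi\!\left(\f{1}{3} - \f{u^2}{30}\right) > 1$, which holds with room to spare, closing the argument.

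The main obstacle is this final inequality, and specifically making it uniform. Both sides vanish like $u^4$ as $u\to 0$, so near the origin the inequality comes down to a comparison of leading coefficients, $\f{\pi^2}{3} > 1$; the difficulty is choosing bounds (the truncated series for $\sin u - u\cos u$ and Jordan's inequality) that are simultaneously tight enough to preserve this margin as $u\to 0$ and crude enough to remain valid all the way up to $u_{\max}$. Once the inequality is secured, strict monotonicity of $h$ together with $\lim_{x\to\infty}h(x)=\pi$ yields $h(x)<\pi$, hence $g(x)<(2\pi c_m)^2$, for every $x\in[1,\infty)$.
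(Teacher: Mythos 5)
Your proposal is correct, and it takes a genuinely different route from the paper --- one that is arguably more rigorous. The paper cites \cite[Appendix]{Ar15} for the limit $\lim_{x\to\infty} g(x) = (2\pi c_m)^2$, and proves monotonicity by brute force: it substitutes $z = 1/x$, computes $h'(z)$ and $h''(z)$ explicitly, and establishes $h''(z) > 0$ on $(0,1]$ by inspection of a plot (Fig.~\ref{fig:hdd}), after which $h' > 0$ and $h > 0$ follow from $\lim_{z\to 0} h'(z) = 0$ (via L'Hospital) and $\lim_{z\to 0} h(z) = 0$. Your substitution is the cleaner device: with $s = \sqrt{x+1}+\sqrt{x}$ and $s^{-1} = \sqrt{x+1}-\sqrt{x}$ one indeed has $s^2 + s^{-2} = 2(2x+1)$, $s^2 - s^{-2} = 4\sqrt{x(x+1)}$, and cosine argument $2u$ with $u = \sqrt{\pi}/s$, so the half-angle identities give the closed form $\pi\sin^2 u/u^2 + u^2\cos^2 u/\pi$ exactly as you claim; this makes the limit self-contained (where the paper relies on a citation) and reduces monotonicity of $g$ in $x$ to strict decrease of that expression in $u$ on $(0,u_{\max}]$, $u_{\max} = \sqrt{\pi}(\sqrt{2}-1) < \pi/2$. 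Your closing inequality chain checks out: on this range $\cos u\,(\cos u - u\sin u) \in (0,1]$, so the right-hand side of $\pi^2\sin u\,(\sin u - u\cos u) > u^4\cos u\,(\cos u - u\sin u)$ is at most $u^4$; Jordan's inequality $\sin u \geq 2u/\pi$ and $\sin u - u\cos u = \int_0^u t\sin t\,dt \geq u^3/3 - u^5/30$ bound the left-hand side below by $2\pi(u^4/3 - u^6/30)$; and the resulting condition $2\pi(1/3 - u^2/30) > 1$ holds with a wide margin since $u^2 \leq \pi(3-2\sqrt{2}) \approx 0.54$. What your route buys is a fully analytic argument that eliminates the one non-rigorous step in the paper's proof (the graphical verification that $h'' > 0$); what the paper's route buys is only mechanical directness, requiring no clever change of variables but paying for it with that gap. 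The final deduction of $g(x) < (2\pi c_m)^2$ from strict monotonicity together with the limit at infinity is the same in both arguments.
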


\begin{proof}[Proof of Lemma~\ref{lem:mon}]
By \cite[Appendix]{Ar15}
\begin{align*}
  \lim_{x \to \infty} g(x) = \f{\pi}{\f{m}{2}+1} = (2 \pi c_m)^2.
\end{align*}
To prove that $g$ is monotonically increasing over $[1,\infty)$,
it suffices to show that $g' = \de{g}{x} > 0$ over $[1,\infty)$.
Taking a derivative of $g$ leads to
\begin{align*}
  \left(\f{m}{2}+1\right)g'(x) &= 2 - \f{2x+1}{\sqrt{x(x+1)}}\cos(a_x)
  - a_x \sin(a_x) \\
  a_x &= 2\sqrt{\pi}(\sqrt{x+1}-\sqrt{x})
  = \f{2\sqrt{\pi}}{\sqrt{x+1}+\sqrt{x}}.
\end{align*}
For $z \in (0,1]$, define the function
\begin{align*}
  &h(z) = \left(\f{m}{2}+1\right) g'\left(\f{1}{z}\right) \\
  &= 2 - \f{2+z}{\sqrt{1+z}}\cos(a_z) - a_z\sin(a_z), \;
  a_z = 2\sqrt{\pi} \f{\sqrt{z}}{\sqrt{1+z}+1}.
\end{align*}
Taking a derivative leads to
\begin{align*}
  &h'(z)
    = \sqrt{\pi} \f{2+z-\sqrt{1+z}}{\sqrt{z}(1+z)(\sqrt{1+z}+1)}\sin(a_z) \\
  & -\left(\f{z}{2\sqrt{(1+z)^3}} + 2\pi \f{1}{\sqrt{1+z}(\sqrt{1+z}+1)^2}
  \right)\cos(a_z)
\end{align*}
Taking another derivative leads to $h''$ as shown in Fig.~\ref{fig:hdd}.
\begin{figure}
  \centering
  \includegraphics[width=0.35\textwidth]{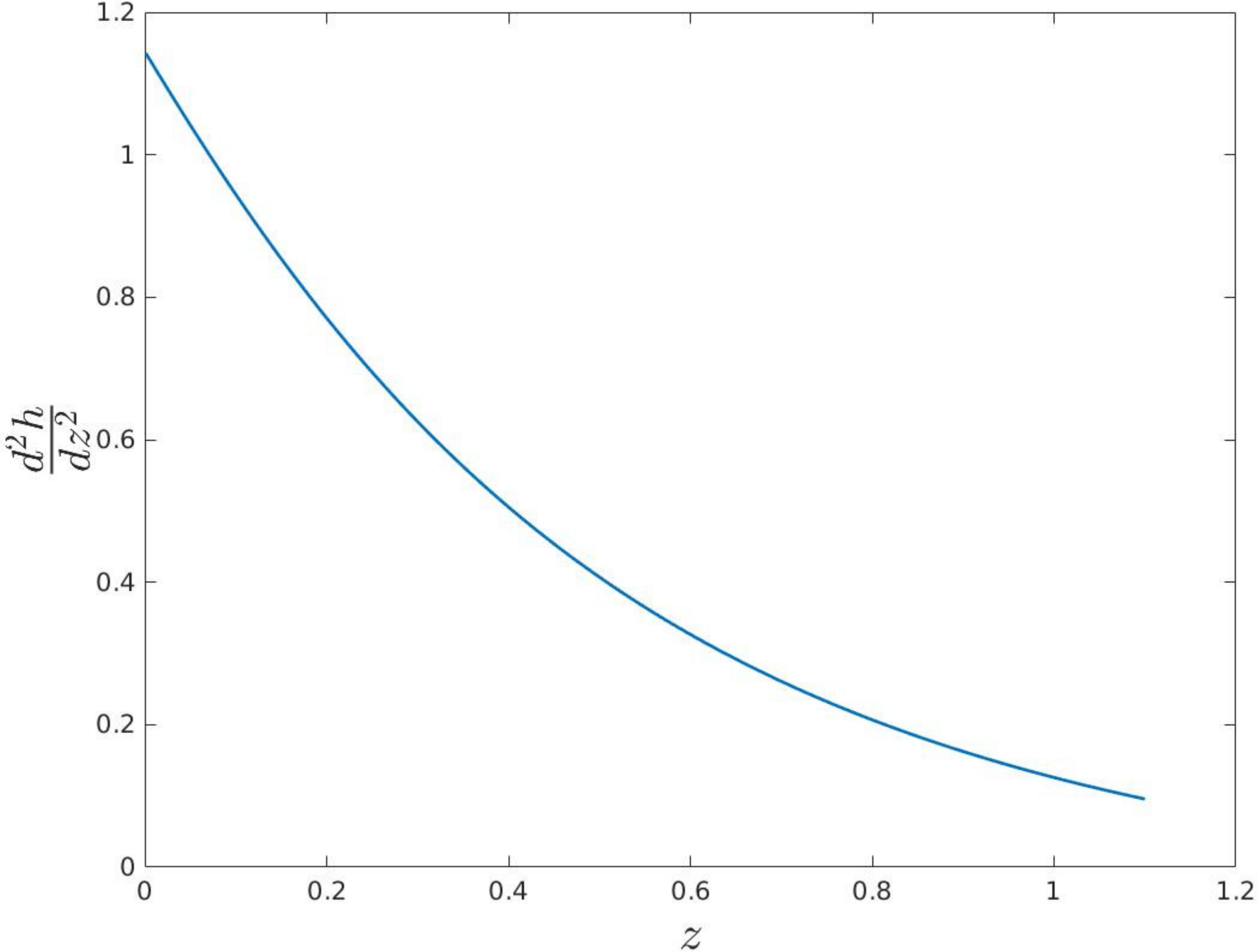}
  \caption{Graph of $h'' = \dd{h}{z}$.}
  \label{fig:hdd}
\end{figure}
Note that for $z \in [0,1]$, Fig.~\ref{fig:hdd} shows that $h''(z) > 0$.
This implies that $h'$ is monotonically increasing over $(0,1]$.
Using that $\lim_{z \to 0} \cos(a_z) = 1$,
$\lim_{z \to 0} \sin(a_z) = 0$, and by L'Hospital's rule we can
evaluate the limit $\lim_{z \to 0} h'(z) = 0$.
Thus, since $h'$ is monotonically increasing over $(0,1]$ and satisfies
$\lim_{z \to 0} h'(z) = 0$, this implies that $h'(z) > 0$ for all
$z \in (0,1]$.
In turn, this implies that $h(z)$ is monotonically increasing over
$z \in (0,1]$.
It is straightforward to see that $\lim_{z \to 0} h(z) = 0$
which, since $h(z)$ is monotonically increasing over $z \in (0,1]$,
implies that $h(z) > 0$ for $z \in (0,1]$.
Consequently, and since $x \in [1,\infty)$ implies that
$\f{1}{x} \in (0,1]$,
this implies that for every $x \in [1,\infty)$,
  $g'(x) = \f{1}{\f{m}{2}+1} h\left(\f{1}{x}\right) > 0$.
Thus, as $g'(x) > 0$ for all $x \in [1,\infty)$, $g$ is monotonically
increasing over $[1,\infty)$.
As $\lim_{x \to \infty} g(x) = (2 \pi c_m)^2$, this implies that
for $x \in [1,\infty)$ we have $g(x) < (2 \pi c_m)^2$.
\end{proof}

\begin{corollary}\label{cor:poles}
For any even integer $m > 0$,
selecting poles along the Archimedes spiral $r = c_m \theta$
according to \eqref{eq:poles}
for
$k \in \left\{1,...,\f{m}{2}\right\}$ 
implies that, for $k \in \left\{1,...,\f{m}{2}-1\right\}$
\begin{align*}
  \max \{d(p_k,p_{k+1}), d(p_{-k},p_{-(k+1)}), d(p_1,p_{-1})\} &< 2 \pi c_m \\
  \lim_{m \to \infty} \lim_{k \to m-1} d(p_k,p_{k+1}) &= 2 \pi c_m.
\end{align*}
\end{corollary}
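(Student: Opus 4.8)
The plan is to reduce every distance appearing in the statement to a comparison against the winding distance $2\pi c_m$, exploiting that the genuinely analytic work---the monotonicity of the relevant comparison function---has already been carried out in Lemma~\ref{lem:mon}. First I would compute $d(p_k,p_{k+1})$ for two consecutive poles on the spiral. Applying the law of cosines in polar coordinates exactly as in the proof of Lemma~\ref{lem:arch}, and substituting $r_k=\sqrt{k/(\frac{m}{2}+1)}$, $r_{k+1}=\sqrt{(k+1)/(\frac{m}{2}+1)}$, and $\theta_{k+1}-\theta_k=2\sqrt{\pi}(\sqrt{k+1}-\sqrt{k})$ from \eqref{eq:poles}, I expect the algebra to collapse to the clean identity
\begin{align*}
  d(p_k,p_{k+1})^2
  = \f{2k+1-2\sqrt{k(k+1)}\cos\big(2\sqrt{\pi}(\sqrt{k+1}-\sqrt{k})\big)}{\frac{m}{2}+1}
  = g(k),
\end{align*}
where $g$ is precisely the function appearing in Lemma~\ref{lem:mon}. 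This identification is the conceptual heart of the argument.

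Given this identity, the first inequality is immediate: since Lemma~\ref{lem:mon} establishes that $g$ is monotonically increasing on $[1,\infty)$ with $\lim_{x\to\infty}g(x)=(2\pi c_m)^2$, every integer $k\in\{1,\dots,\frac{m}{2}-1\}$ satisfies $d(p_k,p_{k+1})^2=g(k)<(2\pi c_m)^2$, hence $d(p_k,p_{k+1})<2\pi c_m$. The distance $d(p_{-k},p_{-(k+1)})$ requires no new computation: because $p_{-k}=\overline{p_k}$ is the reflection of $p_k$ across the real axis and reflection is an isometry, $d(p_{-k},p_{-(k+1)})=d(p_k,p_{k+1})<2\pi c_m$.

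The pair $d(p_1,p_{-1})$ is the one genuinely separate case, since it straddles the real axis rather than consisting of successive windings. Here $p_1$ and $p_{-1}$ share the radius $r_1$ and have opposite arguments $\pm\theta_1=\pm2\sqrt{\pi}$, so the law of cosines gives $d(p_1,p_{-1})^2=2r_1^2\big(1-\cos(2\theta_1)\big)=\frac{2}{\frac{m}{2}+1}\big(1-\cos(4\sqrt{\pi})\big)$. Comparing with $(2\pi c_m)^2=\frac{\pi}{\frac{m}{2}+1}$ from Lemma~\ref{lem:mon}, the factor $\frac{m}{2}+1$ cancels and the desired inequality reduces to the single, $m$-independent numerical fact $2(1-\cos(4\sqrt{\pi}))<\pi$; since $\cos(4\sqrt{\pi})=\cos(4\sqrt{\pi}-2\pi)>1-\frac{\pi}{2}$, this holds, which completes the bound on the maximum. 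The limiting statement asserting tightness then follows from the same identity $d(p_k,p_{k+1})^2=g(k)$ together with $\lim_{x\to\infty}g(x)=(2\pi c_m)^2$: as the index runs out to its largest admissible value, which grows with $m$, the gap between the outermost consecutive poles approaches the winding distance, so the bound $2\pi c_m$ is attained asymptotically.

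The main obstacle I anticipate is not any individual bound but verifying that the polar-coordinate expansion of $d(p_k,p_{k+1})^2$ reproduces the function $g$ of Lemma~\ref{lem:mon} verbatim; once this match is confirmed, the monotonicity result does all the remaining work and the reflection and innermost-pair cases are short. The only self-contained piece of real checking is the elementary numerical inequality $2(1-\cos(4\sqrt{\pi}))<\pi$ for the pair $p_1,p_{-1}$, which must be done by hand because it is the lone distance not covered by the $g$ machinery.
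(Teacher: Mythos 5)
Your proposal is correct and follows essentially the same route as the paper's proof: the identification $d(p_k,p_{k+1})^2=g(k)$ combined with the monotonicity and limit of $g$ from Lemma~\ref{lem:mon}, the reflection symmetry $p_{-k}=\overline{p_k}$ disposing of the conjugate pairs, and a direct law-of-cosines computation for $d(p_1,p_{-1})$ (the paper uses the marginally stronger observation that $2\theta_1-2\pi\in\left(0,\f{\pi}{2}\right)$ gives $\cos(2\theta_1-2\pi)>0$, hence $2-2\cos(2\theta_1)<2<\pi$, where you only need $\cos(4\sqrt{\pi})>1-\f{\pi}{2}$, but these are the same verification in substance). The only cosmetic divergence is the limiting statement, which the paper cites directly from \cite[Appendix]{Ar15} while you derive it from $\lim_{x\to\infty}g(x)=(2\pi c_m)^2$ in Lemma~\ref{lem:mon} --- which rests on the same source.
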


Corollary~\ref{cor:poles} shows that in the limit, the distance between
successive poles in the Archimedes spiral approaches $2 \pi c_m$,
which is the same the distance between successive windings of the
spiral.  So, in this sense the distance between successive poles
is as close as possible to the distance between successive windings, which
was the goal for the pole selection along the spiral in an attempt to
approximate a uniform selection of poles over the unit disk.

\begin{proof}[Proof of Corollary~\ref{cor:poles}]
That $\lim_{m \to \infty} \lim_{k \to m-1} d(p_k,p_{k+1}) = 2 \pi c_m$ is proven
in \cite[Appendix]{Ar15}.
Since $2\theta_1 - 2\pi = 4\sqrt{\pi} - 2 \pi \in \left(0,\f{\pi}{2}\right)$,
$\cos(2\theta_1-2\pi) > 0$ so
\begin{align*}
  &d(p_1,p_{-1})^2 = r_1^2(2 - 2\cos(2\theta_1))
  = \f{(2 - 2\cos(2\theta_1-2\pi))}{\f{m}{2}+1} \\
  &< \f{2}{\f{m}{2}+1}
  < \f{\pi}{\f{m}{2}+1}
  = (2 \pi c_m)^2.
\end{align*}
So taking the square root implies that $d(p_1,p_{-1}) < 2 \pi c_m$.
For any $k \in \left\{1,...,\f{m}{2}\right\}$, $p_{-k} = \overline{p}_k$,
which implies by symmetry that
$d(p_{-k},p_{-(k+1)}) = d(p_k,p_{k+1})$
for all $k \in \left\{1,...,\f{m}{2}-1\right\}$.
Thus, to prove Corollary~\ref{cor:poles} it suffices to show that
for all $k \in \left\{1,...,\f{m}{2}-1\right\}$, $d(p_k,p_{k+1}) < 2 \pi c_m$.
Define $g(x)$ as in Lemma~\ref{lem:mon}, and note that for all
$k \in \left\{1,...,\f{m}{2}-1\right\}$, $g(k) = d(p_k,p_{k+1})^2$.
Since $k \in \left\{1,...,\f{m}{2}-1\right\} \subset [1,\infty)$, by
Lemma~\ref{lem:mon},
$d(p_k,p_{k+1})^2 = g(k) < (2\pi c_m)^2$
so taking the square root implies that $d(p_k,p_{k+1}) < 2 \pi c_m$.
\end{proof}

\begin{corollary}\label{cor:exists}
For any even integer $m > 0$, select poles along the Archimedes spiral
$r = c_m \theta$ according to Corollary~\ref{cor:poles}.
Then for any $z \in \d$, there exists $k \in \left\{1,...,\f{m}{2}\right\}$
such that $d(z,p_k) < 4 \pi c_m$ and either $|p_k| \leq |z|$ or $k = 1$.
\end{corollary}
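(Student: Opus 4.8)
The plan is to combine the spiral-covering estimate of Corollary~\ref{cor:dist} with the inter-pole spacing bound of Corollary~\ref{cor:poles} (reinforced by Lemma~\ref{lem:mon}) through the triangle inequality. Given $z \in \d$, I would first apply Corollary~\ref{cor:dist} with $c = c_m$ to produce a point $p^* = (r^*,\theta^*)$ on the spiral $r = c_m \theta$ with $\theta^* \geq 0$, $r^* = c_m \theta^* \leq |z|$, and $d(z,p^*) < 2\pi c_m$. Since $r^* \leq |z| < 1$ and the would-be next pole $p_{\f{m}{2}+1}$ sits at angle $\theta_{\f{m}{2}+1} = 2\sqrt{\pi(\f{m}{2}+1)}$ with radius $c_m \theta_{\f{m}{2}+1} = 1$, it follows that $\theta^* < \theta_{\f{m}{2}+1}$. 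The goal is then to locate a selected pole $p_k$ with $d(p^*,p_k) < 2\pi c_m$, since $d(z,p_k) \leq d(z,p^*) + d(p^*,p_k) < 4\pi c_m$ would then give the claimed bound; the radius condition is handled alongside.

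In the main case $\theta^* \geq \theta_1$, let $k$ be the largest index in $\left\{1,\dots,\f{m}{2}\right\}$ with $\theta_k \leq \theta^*$, so $\theta_k \leq \theta^* < \theta_{k+1}$, where for $k=\f{m}{2}$ the symbol $\theta_{k+1}$ denotes the boundary winding above. With $\psi = \theta^* - \theta_k$, the law of cosines gives $d(p^*,p_k)^2 = c_m^2\left[(\theta_k+\psi)^2 + \theta_k^2 - 2\theta_k(\theta_k+\psi)\cos\psi\right]$; since $\psi \leq \theta_{k+1}-\theta_k = \f{2\sqrt{\pi}}{\sqrt{k+1}+\sqrt{k}} < \pi$, a short computation shows the bracketed expression is nondecreasing in $\psi$, so $d(p^*,p_k) \leq d(p_k,p_{k+1})$. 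By Corollary~\ref{cor:poles} (and Lemma~\ref{lem:mon}, which gives $g(x) < (2\pi c_m)^2$ for all $x \geq 1$ and hence covers the index $k=\f{m}{2}$ through $g(\f{m}{2}) = d(p_{\f{m}{2}},p_{\f{m}{2}+1})^2$), this yields $d(p^*,p_k) < 2\pi c_m$. Moreover $|p_k| = c_m\theta_k \leq c_m\theta^* = r^* \leq |z|$, so both conclusions hold with this $k \geq 1$.

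In the boundary case $\theta^* < \theta_1$ I would take $k=1$, for which the radius condition is waived, leaving only the estimate $d(p^*,p_1) < 2\pi c_m$. Writing $\phi = \theta_1 - \theta^* \in (0,\theta_1]$, the law of cosines gives $d(p^*,p_1)^2 = c_m^2 F(\phi)$ with $F(\phi) = (\theta_1-\phi)^2 + \theta_1^2 - 2\theta_1(\theta_1-\phi)\cos\phi$ and $\theta_1 = 2\sqrt{\pi}$; since $c_m$ cancels, the required inequality is equivalent to the $m$-independent bound $\max_{\phi \in [0,\theta_1]} F(\phi) < 4\pi^2$. I expect this to be the main obstacle: the first arc spans angle $\theta_1 = 2\sqrt{\pi} > \pi$, so the monotonicity argument of the main case breaks down and $F$ is genuinely non-monotone. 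I would instead split $[0,\theta_1]$ into a small-angle subinterval, where $\cos\phi$ close to $1$ forces $F$ to stay small, and its complement, where the crude bound $F(\phi) \leq (2\theta_1-\phi)^2$ (valid because $-\cos\phi \leq 1$) applies; an elementary estimate then gives $\max F < 4\pi^2$ with margin to spare. Combining the two cases completes the proof.
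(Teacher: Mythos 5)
Your proposal is correct and takes essentially the same route as the paper's proof: apply Corollary~\ref{cor:dist}, use the triangle inequality, show the squared distance to $p_k$ is nondecreasing in the angle over the arc $[\theta_k,\theta_{k+1}]$ so that the spacing bounds of Corollary~\ref{cor:poles} and Lemma~\ref{lem:mon} apply (your explicit invocation of Lemma~\ref{lem:mon} at the terminal index $k = \f{m}{2}$ is in fact slightly more careful than the paper, which cites only Corollary~\ref{cor:poles} at that step), and handle the regime $\hat{\theta} < \theta_1$ separately with $k=1$. The only cosmetic difference lies in that last case: the paper splits at $\theta_1 - \hat{\theta} = \f{\pi}{2}$, bounding the cosine below by $0$ on one side and by $-1$ on the other and checking $2\sqrt{2\pi} < 2\pi$ and $4\sqrt{\pi} - \f{\pi}{2} < 2\pi$, which is the same two-regime idea as your small-angle/crude-bound split and confirms that your sketched estimate indeed closes with margin.
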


\begin{proof}[Proof of Corollary~\ref{cor:exists}]
Fix $z \in d$.
By Corollary~\ref{cor:dist}, there exists a point $p$ in the spiral
$r = c_m \theta$
such that $|p| \leq |z|$ and $d(z,p) < 2 \pi c_m$.
Since $p$ is in the spiral, there exists $\hat{\theta} \geq 0$ such that
$p = (c_m \hat{\theta},\hat{\theta})$.
If $\hat{\theta} = \theta_k$ for some $k \in \left\{1,...,\f{m}{2}\right\}$,
then $p = p_k$ and $d(z,p_k) < 2 \pi c_m < 4 \pi c_m$ as desired.
So, it suffices to assume that $\hat{\theta} \neq \theta_k$ for any
$k \in \left\{1,...,\f{m}{2}\right\}$.
We claim that there exists $\hat{k} \in \left\{1,...,\f{m}{2}\right\}$
such that $d(p,p_{\hat{k}}) < 2 \pi c_m$ and either $|p_{\hat{k}}| \leq |z|$
or $|p_k| = r_1 = \sqrt{\f{k}{\f{m}{2}+1}}$.
From this claim the corollary follows since
$d(z,p_{\hat{k}}) \leq d(z,p) + d(p,p_{\hat{k}}) < 4 \pi c_m$.
So, it suffices to prove the claim.
To do so, we consider two cases.

Case 1: suppose that $\hat{\theta} \in (\theta_{\hat{k}},\theta_{\hat{k}+1})$
for some $\hat{k} \in \left\{1,...,\f{m}{2}\right\}$.
Note that $|p_{\hat{k}}| = c_m\theta_{\hat{k}} < c_m \hat{\theta} = |p| \leq |z|$.
Therefore, for Case 1 it suffices to show that $d(p,p_{\hat{k}}) < 2 \pi c_m$.
Define the function
\begin{align*}
  g(\theta) &= d((c_m\theta,\theta),p_{\hat{k}})^2 \\
  &=
  (c_m\theta)^2 + (c_m\theta_{\hat{k}})^2 - 2 (c_m\theta)(c_m\theta_{\hat{k}})
  \cos(\theta-\theta_{\hat{k}}).
\end{align*}
We claim that $g$ is monotonically nondecreasing on the interval
$\theta \in \left[\theta_{\hat{k}},\theta_{\hat{k}+1}\right]$.
This will then imply, since
$\hat{\theta} \in \left(\theta_{\hat{k}},\theta_{\hat{k}+1}\right)$ and
the square root function is monotonically nondecreasing, that
\begin{align*}
  d(p,p_{\hat{k}}) &= \sqrt{g(\hat{\theta})}
  \leq \sqrt{g(\theta_{\hat{k}+1})}
  = d(p_{\hat{k}+1},p_{\hat{k}})
  < 2 \pi c_m
\end{align*}
which will prove Case 1.
So, it remains to show that $g$ is monotonically nondecreasing
over $\left[\theta_{\hat{k}},\theta_{\hat{k}+1}\right]$.
To do so, it suffices to show that $\de{g}{\theta} \geq 0$
over $\left[\theta_{\hat{k}},\theta_{\hat{k}+1}\right]$.
First note that the function $k \to \sqrt{k+1}-\sqrt{k}$ is monotonically
decreasing in $k$ (to see this, note that $\sqrt{x+1} - \sqrt{x}$ has
negative derivative for $x > 0$).
By this fact and the definition of $\theta_k$ in \eqref{eq:poles},
for any $\theta \in \left[\theta_{\hat{k}},\theta_{\hat{k}+1}\right]$ we have
\begin{align*}
  \theta - \theta_{\hat{k}} &\leq \theta_{\hat{k}+1} - \theta_{\hat{k}}
  = \f{\sqrt{\hat{k}+1} - \sqrt{\hat{k}}}{(2 \sqrt{\pi})^{-1}}
  \leq \f{\sqrt{2}-\sqrt{1}}{(2 \sqrt{\pi})^{-1}}
  < \f{\pi}{2}
\end{align*}
where the final inequality follows from evaluation.
Thus, $\theta \in \left[\theta_{\hat{k}},\theta_{\hat{k}+1}\right]$ implies
$\sin(\theta-\theta_{\hat{k}}) \geq 0$ and
$\cos(\theta-\theta_{\hat{k}}) \geq 0$.
So
\begin{align*}
  \f{1}{c_m^2}\de{g}{\theta} &= 2 \theta
  - 2 \theta_{\hat{k}} \cos(\theta-\theta_{\hat{k}})
  + 2 \theta \theta_{\hat{k}} \sin(\theta-\theta_{\hat{k}}) \\
  &\geq 2 \theta - 2 \theta_{\hat{k}}
  + 2 \theta \theta_{\hat{k}} \sin(\theta-\theta_{\hat{k}})
  \geq 0
\end{align*}
since $\theta > \theta_{\hat{k}}$ and the other term is positive.
Thus, $\de{g}{\theta} \geq 0$
over $\left[\theta_{\hat{k}},\theta_{\hat{k}+1}\right]$, which concludes the
proof of Case 1.

Case 2: suppose $\hat{\theta} < \theta_1$.
If $\theta_1 - \hat{\theta} \geq \f{\pi}{2}$ then 
\begin{align*}
  &d(p,p_1)^2 = c_m^2(\hat{\theta}^2 + \theta_1^2 - 2\hat{\theta}\theta_1
  \cos(\hat{\theta}-\theta_1)) \\
  &\leq \f{\hat{\theta}^2 + \theta_1^2 + 2\hat{\theta}\theta_1}{(c_m)^{-2}}
  = \f{(\hat{\theta}+\theta_1)^2}{(c_m)^{-2}}
  \leq \f{\left(2\theta_1-\f{\pi}{2}\right)^2}{(c_m)^{-2}}
  < \f{(2\pi)^2}{(c_m)^{-2}}
\end{align*}
since $2\theta_1 - \f{\pi}{2} = 4 \sqrt{\pi} - \f{\pi}{2} < 2 \pi$,
so taking the square root implies that $d(p,p_1) < 2 \pi c_m$.
Otherwise, if $\theta_1 - \hat{\theta} \leq \f{\pi}{2}$ then
$\cos(\theta_1 - \hat{\theta}) \geq 0$, so
\begin{align*}
  &d(p,p_1)^2 = c_m^2(\hat{\theta}^2 + \theta_1^2 - 2\hat{\theta}\theta_1
  \cos(\hat{\theta}-\theta_1))
  \leq c_m^2(\hat{\theta}^2 + \theta_1^2) \\
  &\leq c_m^2(2 \theta_1^2)
  = c_m^2(\sqrt{2} \theta_1)^2
  \leq c_m^2(2 \pi)^2
\end{align*}
since $\sqrt{2}\theta_1 = 2\sqrt{2\pi} < 2\pi$, so taking the square
root implies that $d(p,p_1) < 2 \pi c_m$.
Thus, either way we have $d(p,p_1) < 2 \pi c_m$, which proves Case 2.
\end{proof}

\begin{proof}[Proof of Theorem~\ref{thm:spiral}]
Let $S \in \f{1}{z}\h$ and let $\mathcal{Q}$ be the poles of $S$.
The proof considers the spiral pole selection $\p$ of Corollary~\ref{cor:poles}.
For any pole $q$ in $\mathcal{Q}$, by Corollary~\ref{cor:exists} the distance
between $q$ and the closest pole in the spiral $\p$ is bounded by $4 \pi c_m$.
Traversing the spiral along neighboring poles
(i.e., $p_k$ and $p_{k+1}$), Corollary~\ref{cor:poles}
implies that the distance from $q$ to each successive pole in the spiral cannot
increase by more than $2 \pi c_m$.  This implies that the $m_q$ closest poles
in the spiral to $q$ are all within a distance of $(m_q+1)(2\pi c_m)$ from $q$.
As $(m_q+1)(2\pi c_m)$ is bounded by a constant times $\f{1}{\sqrt{m+2}}$,
this provides a bound on $D(\p)$.
Furthermore, for use of Theorem~\ref{thm:spiral} in the proofs of
Corollary~\ref{cor:spiral} and \cite[Corollary~1]{Fi22b},
additional work is required at the beginning of the proof to
only choose poles from the spiral within a ball of fixed radius
$\hat{r} \in (0,1)$, and at the end of the proof to only choose poles from the
spiral that are at least a fixed distance $\delta > 0$ away from all poles
in $\sigma$ which they are not approximating (see Assumption A5).

We begin by finding the desired radius $\hat{r} \in (0,1)$ of the ball for
intersecting with the spiral.
Let the poles $p_k$ and $p_{-k}$ be selected along the Archimedes
spiral according to Corollary~\ref{cor:poles} for the even integer $m$, and
define
$\hat{\p}_m = \mcup_{k=1}^{\f{m}{2}} \{p_k,p_{-k}\}$.
Let $m_{\max} = \max_{q \in \mathcal{Q}} m_q$.
Let $r_q = \max_{q \in \mathcal{Q}} |q|$ and let
$r_p = \max_{p \in \hat{\p}_{m_{\max}}} |p|$.
Let $\hat{r} = \max\{r_q,r_p\}$ and note that this implies that
$\mathcal{Q}, \hat{\p}_{m_{\max}} \subset \overline{B}_{\hat{r}}$
and $p_1 \in \overline{B}_{\hat{r}}$ for any $m$.
Let $\p_m = \hat{\p}_m \cap \overline{B}_{\hat{r}}$.
By the definition of the pole selection in Corollary~\ref{cor:poles}, and since
$\hat{\p}_{m_{\max}} \subset \overline{B}_{\hat{r}}$,
for any $m \geq m_{\max}$
we have $\p_m = \mcup_{k=1}^{k'} \{p_k,p_{-k}\}$ for some
$k' \in \left[\f{m_{\max}}{2},\f{m}{2}\right]$.
Thus, $|\p_m| \geq m_{\max}$.

Next we consider a pole $q$ of $\mathcal{Q}$, and bound the distance of its
$m_q$ closest approximating poles from the spiral.
Let $q \in \mathcal{Q}$.
By Corollary~\ref{cor:exists},
there exists $\hat{k} \in \left\{1,...,\f{m}{2}\right\}$
such that $d(q,p_{\hat{k}}) < 4 \pi c_m$
and either $|p_{\hat{k}}| \leq |q|$ or $\hat{k} = 1$.
The latter implies that $p_{\hat{k}} \in \overline{B}_{\hat{r}}$,
so $p_{\hat{k}} \in \p_m$.
As $2k' = |\p_m| \geq m_{\max} \geq m_q$, there exists a subset of
consecutive integers $S_{(q,m)} \subset \{-k', ..., -1,1,...,k\}$ (where
we define $-1$ and $1$ to be consecutive for this purpose) such that
$\hat{k} \in S_{(q,m)}$ and $|S_{(q,m)}| = m_q$.
Let $\p_m(q)$ denote the $m_q$ closest poles in $\p_m$ to $q$,
and let $\hat{i} = \argmax_{i \in S_{(q,m)}} |p_i-q|$.
Then by Corollary~\ref{cor:poles}
\begin{align*}
  &\hat{d}(q) = \max\nolimits_{p \in \p_m(q)} |p-q|
  \leq \max\nolimits_{i \in S_{(q,m)}} |p_i-q|
  = d(q,p_{\hat{i}}) \\
  &\leq d(q,p_{\hat{k}}) + d(p_{\hat{k}},p_{\hat{i}})
  < 4\pi c_m + \sum\nolimits_{i = \hat{k}}^{\hat{i}-1} d(p_i,p_{i+1}) \\
  &\leq 4 \pi c_m + |\hat{i}-\hat{k}|(2 \pi c_m)
  \leq 4 \pi c_m + (m_q-1)(2 \pi c_m) \\
  &= \f{(m_q+1)}{(2 \pi c_m)^{-1}}
  \leq \f{(m_{\max}+1)}{(2 \pi c_m)^{-1}}
  = \f{(m_{\max}+1)\sqrt{2 \pi}}{\sqrt{m+2}}.
\end{align*}
So
\begin{align*}
  D(\p_m) = \max_{q \in \mathcal{Q}} \hat{d}(q) \leq
  (m_{\max}+1)\sqrt{2 \pi} \f{1}{\sqrt{m+2}}.
\end{align*}
Substituting $m = 2n-2$ as in the statement of Theorem~\ref{thm:spiral} gives
\begin{align*}
  D(\p_n) \leq (m_{\max}+1)\sqrt{\pi} \f{1}{\sqrt{n}},
\end{align*}
so $\{p_n\}_{n=2}^\infty$ has geometric convergence rate $\f{1}{n^{1/2}}$.
Thus, $\{p_n\}_{n=2}^\infty$ exhibits geometric convergence, so by
Lemma~\ref{lem:equiv} it is space-filling.

Now we derive the desired $\delta > 0$ for ensuring the spiral poles are
at least distance $\delta$ from all poles of the plant which they are not
approximating.
As $\mathcal{Q}$ and $\sigma$ are finite (the latter by Assumption A5),
$\eta = \min_{q \in \mathcal{Q}, \lambda \in \sigma, \lambda \neq \sigma}
d(\lambda,q) > 0$
and $d(\lambda,q) \geq \eta$ for all such $\lambda \neq q$.
As $\{p_n\}_{n=2}^\infty$ exhibits geometric convergence, there exists
$N$ such that $n \geq N$ implies that $D(\p_n) \leq \f{1}{2} \eta$.
This implies that for $n \geq N$, $q \in \mathcal{Q}$, and $\lambda \in \sigma$
with $\lambda \neq q$, $d(\p_n(q),\lambda) \geq \f{1}{2}\eta$.
For $n \in \{2,...,N-1\}$ let
$d_n = \min_{q \in \mathcal{Q}, \lambda \in \sigma, \lambda \neq \sigma} d(\lambda,q)$.
Traversing downwards from $n = N-1$, let $\hat{N}$ denote the final value
of $n \in \{2,...,N-1\}$ for which Assumption A5 is satisfied by $\p_n$.
Then, for each $n \in \{\hat{N},...,N-1\}$, $d_n > 0$.
Let $d_N = \f{1}{2}\eta$, and let $\delta = \min_{n=\hat{N}}^N d_n > 0$.
Then for each $n \geq \hat{N}$, $q \in \mathcal{Q}$, and $\lambda \in \sigma$
with $\lambda \neq q$, $d(\p_n(q),\lambda) \geq \delta$.
The choices of $r$ and $\delta$ in this proof make it possible to apply
Theorem~\ref{thm:approx} uniformly to $\p_n$ for all $n \geq \hat{N}$
in the proofs of Corollary~\ref{cor:spiral} and \cite[Corollary~1]{Fi22b}.
\end{proof}

\begin{proof}[Proof of Corollary~\ref{cor:spiral}]
  By Theorem~\ref{thm:spiral}, $\{p_n\}_{n=1}^\infty$ is a sequence of poles
  with geometric convergence rate $\f{1}{n^{1/2}}$.
  Thus, the result follows by Theorem~\ref{thm:conv}.
\end{proof}

\section{Conclusion}\label{sec:con}

In Part I, SPA was introduced as a new Galerkin-type method for
finite dimensional approximations of Hardy space that is an alternative to
Lorentz series approximations such as FIR.
For any transfer function in Hardy space, approximation error bounds for 
SPA were provided which bound the Hardy space norms between the SPA and the
desired transfer function proportionally to the geometric distance between
their poles.
These were then used to show that the space of SPAs converges to the entire
Hardy space for any space-filling sequence of poles, and to provide a
uniform convergence rate that depends purely on the geometry of the SPA pole
selection.
Unlike with Lorentz series approximations such as FIR, where the convergence
rate is often related to the settling time of the optimal transfer function,
the uniform convergence of SPA ensures that it works well even when the
optimal transfer function has a long settling time, such as in systems with
large separation of time scales.
Finally, these results were specialized to the particular case of an Archimedes
spiral pole selection, for which an explicit uniform convergence rate was
provided.

In Part II, these results will be combined with SLS to develop a new control
design method with reduced suboptimality, guaranteed feasibility, ability to
include prior knowledge, formulation that requires solving only a single SDP,
and that avoids deadbeat control.
Furthermore, the proofs of the suboptimality certificates rely heavily on the
approximation and convergence results for SPA from Part I.
An example of optimal control design in Part II will demonstrate superior
performance of SPA compared to FIR for that case.

Future work will involve applying SPA to IOP, Youla parameterization, and
SLS with output feedback,
considerations of different space-filling sequences with favorable geometric
properties, and extensions to continuous time approximation methods.

\bibliographystyle{ieeetr}
\bibliography{refs}

\vspace{-20pt}

\begin{IEEEbiography}
  [{\includegraphics[width=1in,height=1.25in,clip,keepaspectratio]
      {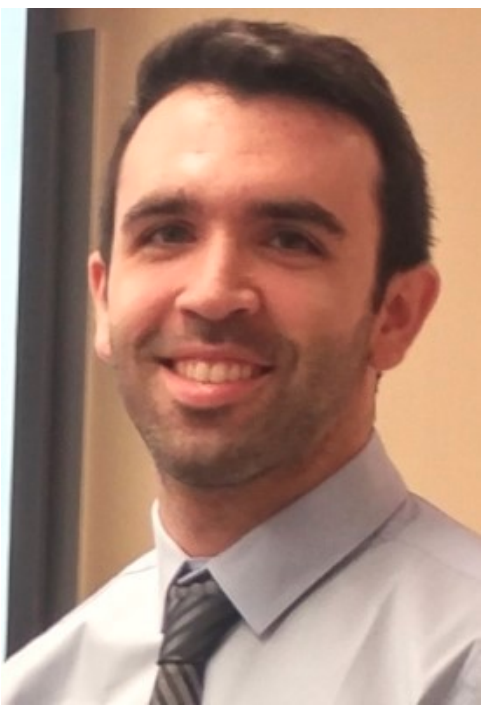}}]{Michael W. Fisher} is an Assistant Professor
  in the Department of Electrical and Computer Engineering at the University of
  Waterloo, Canada.  He was a postdoctoral researcher with
  the Automatic Control and Power System Laboratories at
  ETH Zurich.  He received his Ph.D. in Electrical Engineering:
  Systems at the University of Michigan, Ann Arbor in 2020, and a
  M.Sc. in Mathematics from the same institution in 2017. He received
  his B.A. in Mathematics and Physics from Swarthmore College in 2014.
  His research interests are in dynamics, control, and optimization of
  complex systems, with an emphasis on electric power systems.
  He was a finalist for the 2017 Conference on Decision and Control (CDC)
  Best Student Paper Award and a recipient
  of the 2019 CDC Outstanding Student Paper Award.
\end{IEEEbiography}

\vspace{-20pt}

\begin{IEEEbiography}
  [{\includegraphics[width=1in,clip,keepaspectratio]
      {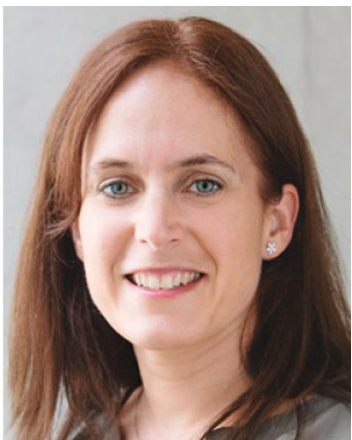}}]{Gabriela Hug} was born in Baden,
  Switzerland. She received the M.Sc. degree in electrical engineering
  and the Ph.D. degree from the Swiss Federal Institute of Technology
  (ETH), Zurich, Switzerland, in 2004 and 2008, respectively. After
  the Ph.D. degree, she worked with the Special Studies Group of Hydro
  One, Toronto, ON, Canada, and from 2009 to 2015, she was an
  Assistant Professor with Carnegie Mellon University, Pittsburgh, PA,
  USA. She is currently a Professor with the Power Systems Laboratory,
  ETH Zurich. Her research is dedicated to control and optimization of
  electric power systems.
\end{IEEEbiography}

\vspace{-20pt}

\begin{IEEEbiography}
  [{\includegraphics[width=1in,height=1.25in,clip,keepaspectratio]
      {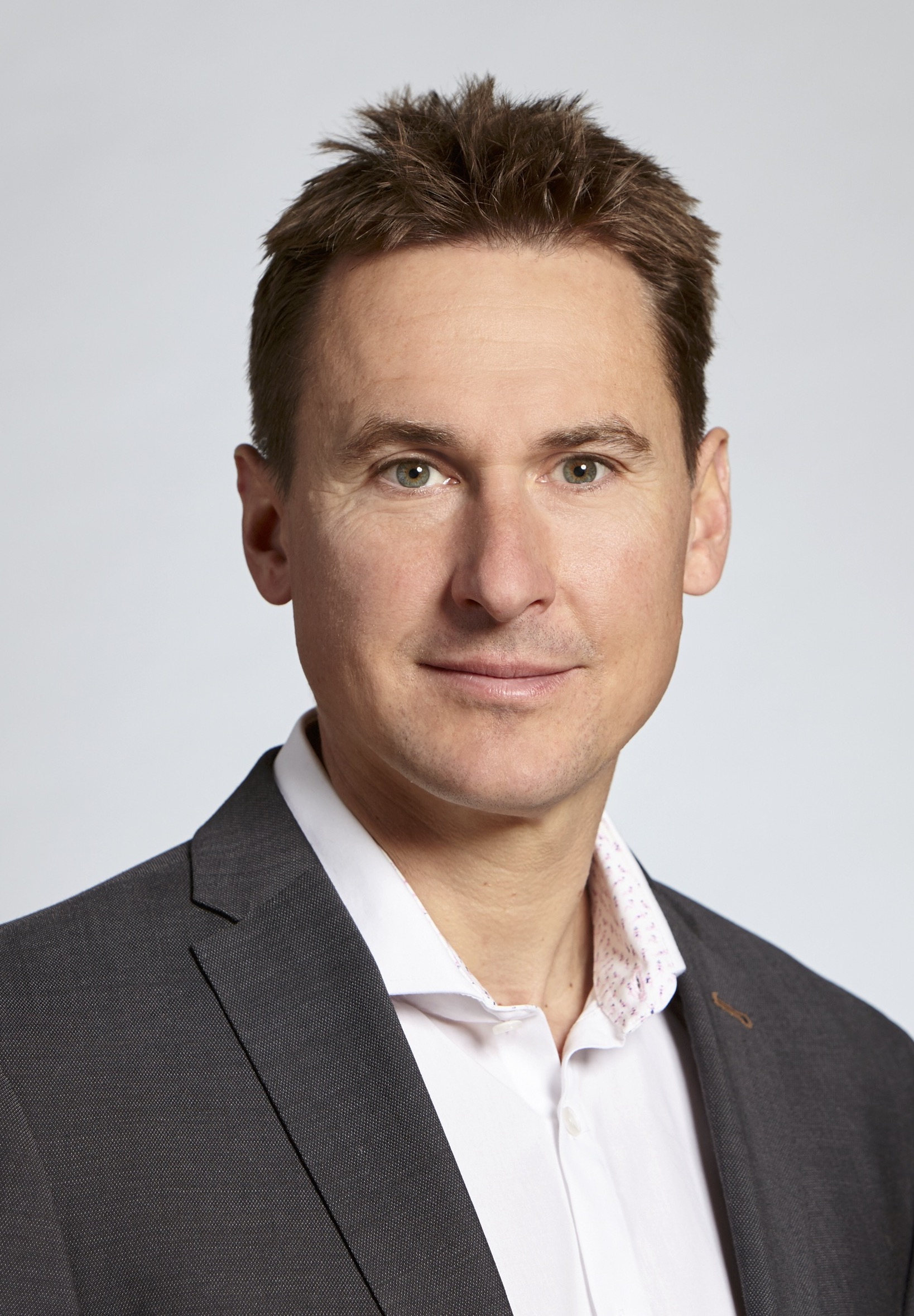}}]{Florian D\"{o}rfler} is an Associate
  Professor at the Automatic Control Laboratory at ETH Zurich,
  Switzerland, and the Associate Head of the Department of Information
  Technology and Electrical Engineering. He received his Ph.D. degree
  in Mechanical Engineering from the University of California at Santa
  Barbara in 2013, and a Diplom degree in Engineering Cybernetics from
  the University of Stuttgart, Germany, in 2008. From 2013 to 2014 he
  was an Assistant Professor at the University of California
  Los
  Angeles. His primary research interests are centered around control,
  optimization, and system theory with applications in network
  systems, especially electric power grids. He is a recipient of the
  distinguished young research awards by IFAC (Manfred Thoma Medal
  2020) and EUCA (European Control Award 2020).
  His students were
  winners or finalists for Best Student Paper awards at the European
  Control Conference (2013, 2019), the American Control Conference
  (2016), the Conference on Decision and Control (2020), the PES
  General Meeting (2020), the PES PowerTech Conference (2017), and the
  International Conference on Intelligent Transportation Systems
  (2021).
  He is furthermore a recipient of the 2010 ACC Student Best
  Paper Award, the 2011 O. Hugo Schuck Best Paper Award, the 2012-2014
  Automatica Best Paper Award, the 2016 IEEE Circuits and Systems
  Guillemin-Cauer Best Paper Award, and the 2015 UCSB ME Best PhD
  award.
\end{IEEEbiography}

\end{document}